\documentclass[a4paper, 11pt, leqno, twoside]{amsart}
\usepackage{hyperref}

\usepackage{float}

\addtolength{\textwidth}{1.5cm}\addtolength{\oddsidemargin}{-.75cm} \addtolength{\evensidemargin}{-.75cm}
\usepackage{amsmath,amssymb,latexsym,amsthm,enumerate,epsfig,graphicx}

\newtheorem{proposition}{Proposition}[section]

\newtheorem{remark}{Remark}[section]
\numberwithin{equation}{section}

\newcommand{\gcal}{{\mathcal G}}

\newcommand{\jcal}{{\mathcal J}}

\newcommand{\Jcal}{{\mathcal J}}
\newcommand{\const}{{\textsc c}}

\newif\ifcomment \commentfalse
\def\commentON{\commenttrue}

\typeout{Commenti On/Off by Fabio Massacci, Nov. 1995}
\long\outer\def\BC#1\\EC{\ifcomment \sloppy \par \# \ldots\dotfill
{\em #1} \dotfill \# \par \fi } \commentON

\newcommand{\remove}[1]{}

\newcommand{\Frac}{\displaystyle\frac}

\newcommand{\Dx}{\Delta x}
\newcommand{\Dt}{\Delta t}

\newenvironment{vardesc]}[1]{%
\settowidth{\parindent}{#1: \ }
\makebox{#1:}}{}

\newcommand{\R}{\mathbb{R}}

\newcommand{\s}{\mathop{\textsc{s}}}

\begin{document} 
\title[A non-local rare mutations model in evolutionary dynamics]{A non-local rare mutations model for quasispecies and Prisoner's dilemma: numerical assessment of qualitative behaviour}
\author[A.L.~Amadori, M.~Briani,  R.~Natalini]{Anna Lisa Amadori$^{1}$, Maya Briani$^{2}$, Roberto Natalini$^{2}$}
\address{$^1$Dipartimento di Scienze Applicate, Universit\`a di Napoli \lq\lq Parthenope''}
\address{$^2$ Istituto per le Applicazioni del Calcolo \lq\lq M. Picone'', Consiglio Nazionale delle Ricerche}

\maketitle

\begin{abstract}
An integro-differential model for evolutionary dynamics with mutations is investigated by improving the understanding of its behavior using numerical simulations. The proposed numerical approach can handle also density dependent fitness, and gives new insights about the role of mutation in the preservation of cooperation.
\end{abstract}

\noindent
\subparagraph{{\bf \textup{2010} Mathematics Subject Classification:}} Primary 92D25; Secondary: 35R09, 65M06
\subparagraph{{\bf Keywords}}: Coevolutionary dynamics, mutations,  numerical approximation of partial integro-differential equations, quasispecies, prisoner's dilemma.

\section{Introduction}

Evolutionary dynamics describes the dynamics of populations as the result of  the interplay between ecological interactions  and phenotypic variation.
The ecological mechanism of  replication/selection seems to be well described by an ordinary differential equation, where the rate of growth of any specie (i.e.~the balance between births and deaths) depends on the composition of the entire population.
Such rate, also known as the relative fitness, is actually the difference between the absolute fitness of the species of interest and the mean fitness of the population. 
Several shapes have been proposed for the absolute fitness. When one is modeling phenotypes, the choice of a constant fitness seems fair, but, starting with the seminal work by Maynard Smith and Price \cite{MSP73},  an important amount  of research  deals with ideas arising from mathematical game theory, see \cite{SHbook} and references therein. In this framework, the Prisoner's Dilemma has attracted a lot of attention: we mention \cite{Nowak20121} for a detailed account of the state of the art about this topic.
The first attempt of giving account of mutations, dating back to  the '70, is the so called ``quasispecies equation'', where the growth rate of any specie is modified by considering the dispersion  due to the birth of  mutated offspring. See, for instance, the reference book \cite{ESbook79}. The same underlying idea has been included in the evolutionary games setting with the ``replicator-mutator'' equation in \cite{SS92mutation}. 
An interesting and exhaustive account of evolutionary dynamics can be found in the book  \cite{Nowakbook}.
More recently,  macroscopic PDE models have been proposed and studied (see, for instance, \cite{DJMP05}).
A different approach focus on the stochastic dynamics of each individual in the population, as for instance Dieckmann and Law \cite{DU00}, who analyzed the related moment equations. 
Finally, the emerging field of adaptive dynamics has emphasized that selection and mutation act in two different time-scales, and therefore proposed models based on the combination of jump processes and ordinary differential equations. See, for instance, the trait substitution
sequences \cite{M96adaptive}.
A unifying treatment is provided in \cite{CFM08}, where various macroscopic models  are obtained as the limit of one microscopic model by performing different types of rescaling.

In \cite{ACNT} it has been introduced a macroscopic stochastic model for the selection-mutation process, that goes into the direction of adaptive dynamics. Selection is described by a deterministic differential equation, where the relative fitness rules the reproductive rate.  Mutation, instead, is described by a marked point process. Unlike trait substitution sequence, no assumption is made about the selection dynamics. Specifically it is not asked that there is invasion or extinction of the mutant trait between subsequent mutations. 
The related stochastic differential equation has been studied and a Kolmogorov equation has been deduced and investigated analytically.
Such Kolmogorov equation is of integro-differential type: the non-local term is the deterministic counterpart of the marked point process modeling mutation and therefore shows up even if the total number of strategies is finite.
We focus here on the case of only two different strategies, that can be described by a scalar equation, namely
\begin{align} \label{Ktwo}
\left\{\begin{array}{rlr}
 \partial_tu = & - \s x (1\!-\!x) \partial_xu   - \lambda_1 \gamma_1 f_1 x \,\Jcal\!\left(u,-\gamma_1 x\right)& \\[.15cm]
& +\lambda_{0}\gamma_0 f_0 (1\!-\!x) \, \Jcal\!\left(u, \gamma_0(1-x)\right) , & \quad x\in[0,1], \, t>0 ,\\[.15cm]
u(x,0)=& x, & x\in[0,1],\, t=0. \end{array}\right.
\end{align}
Here $x$ is the (initial) frequency of the type labeled 1, and $u(x,t)$ is the expected frequency of the same type after time $t$.
The functions $f_0(x)$ and $f_1(x)$ stand for the fitness of types $0$ and $1$, respectively, and
\begin{align*}\label{s}
\s(x)= f_0(x)-f_1(x) 
\end{align*}
is the selection spread among species 0 and 1. If $\s>0$ the specie 0 has a selection advantage, and viceversa.
The parameters $\gamma_i$ and $\lambda_i$ are related to the mutation process:  $\gamma_0\in[0,1]$ stands for the proportion of the offspring of individuals of type $0$ that  show a type $1$  by effect of mutation, and $\lambda_0>0$ is related to the time intensity of the point process driving mutations from type $0$ to type $1$, which is given by the product $\lambda_0 f_0(x)$;
$\gamma_1$ and $\lambda_1$ play the same role with respect to mutations from type $1$ to type $0$.
The quantity $\Jcal(u,z)$ is the finite increment related to the point process, precisely
\begin{equation}\label{j0}
\Jcal (u,z) (x,t)= \left\{ \begin{array}{cr}
\left[ u(x+z,t)-u(x,t)\right]/z & \quad \mbox{ if } z\neq 0, 
\\[.15cm] 
0 & \quad \mbox{ if } z =0. \end{array}\right.
\end{equation}
In \cite{ACNT}, the analytical theory of the  solutions  to problem \eqref{Ktwo} was started. Global existence and regularity results were proved together with some results about the qualitative behavior of solutions for long times in the quasispecies case. Even if the main picture was sufficiently clear, which is that there are situations where mutations are able to contrast dominant strategies, it is difficult to say what happens in the general case of density-dependent fitness, for instance in the Prisoner's dilemma case. This is the main motivation to study the problem by a numerical simulation side. 
From the numerical point of view, there are two main difficulties in the approximation of problem \eqref{Ktwo} that require some effort. We shall see in fact that, as time increases, a standard upwind approximation mainly fails due to the blowing up of the first derivative 
$\partial_x u(1,t)$ and also due to the presence of the non-local term.
The numerical investigation then require extremely fine meshes over a small portion of the domain to resolve the solution,  especially for large time simulations.
To deal with these difficulty we shall propose to adopt an adaptive  numerical grid which thickens with the increasing of time near the right boundary $x=1$, and an ad-hoc approximation of the nonlocal term near at $x=1$.

The paper is organized as follows: In Section \ref{2} we settle the model and briefly recall the main analytical results obtained in \cite{ACNT}. We also give some extensions that apply to the density-dependent framework. Section \ref{3} is devoted to the presentation of a suitable numerical scheme and  to some simulations. Section \ref{4} contains some conclusions. 

\section{Analytical framework}\label{2}

Our main concern is exploring the relation among  point type rare mutations, described by equation \eqref{Ktwo}, and continuous rare mutations, described by the canonical replicator mutator equation, see  \cite{SS92mutation},
\begin{equation}\label{r-m}
\dot x = -\s  x(1- x) + m_0 f_0(1- x) - m_1 f_1  x .
\end{equation}
Here $\s$ is the selection spread as before, $m_0$  stands for the mutation probability from species 0 to species 1, and viceversa for $m_1$. 
To  this aim, it is convenient to introduce the flux associated to \eqref{r-m}, i.e.~the solution to the homogeneous transport equation
\begin{equation}\label{qs1} 
\left\{\begin{array}{ll}
 \partial_tv = \left(-\s x(1\!-\!x)  -m_1 f_1 x+ m_0 f_0(1\!-\!x)\right) \partial_x v, \quad & x\in[0,1] , \, t>0, \\[.15cm]
v(x,0)=x , &  x\in[0,1] , \, t=0.\end{array}\right.
\end{equation}
When both $m_0,m_1=0$, \eqref{qs1} gives back
\begin{align}\label{r1} \left\{\begin{array}{ll} \partial_tw =- \s x(1\!-\!x) \partial_x w , \quad &  x\in[0,1] , \, t>0, \\[.15cm] w(x,0)=x , & x\in[0,1] , \, t=0,\end{array}\right. \end{align}
which is  the flux of the simple replicator equation:
\begin{equation}\label{r} \dot x = -\s  x(1- x) .\end{equation}
The analogy between equation \eqref{Ktwo} and \eqref{qs1} suggests  to take 
\begin{align*}\label{lgm}
 m_0=\lambda_{0}\gamma_0, \qquad m_1=\lambda_1\gamma_1. 
\end{align*}
This relation put in evidence that there are many point-mutations models \eqref{Ktwo} related to the same selection-mutation equation \eqref{qs1}.
Indeed, there is a two-dimensional set of parameters $(\gamma_0, \lambda_{0}, \gamma_1, \lambda_1)$  that give back the same $m_0$ and $m_1$. 
When $m_i=0$, we may assume without loss of generality that $\gamma_i=0$. Otherwise, we choose to use $\gamma_i$ as a free parameter, and select $\lambda_{i}=m_i/\gamma_{i}\in[m_i,\infty)$.
As $\gamma_{i}$ goes to $0$, the time intensity  $\lambda_{i}$ increases, and the paths of the point process driving mutations becomes continuous. Similarly, the discrete increment $\Jcal$ approaches the actual derivative $\partial_x$.
On the contrary, at $\gamma_{i}=1$ the time intensity gets its minimum $\lambda_{i}=m_i$, and mutations are concentrated in rare events that happen simultaneously to all the offspring.

In the remainder of this paper  we take $m_0$ and $m_1$ as fixed, and write $v(x,t)$ for the solution of \eqref{qs1}, and $u_{\gamma_0,\gamma_1}(x,t)$ for the solution of \eqref{Ktwo} with $\lambda_i=m_i/\gamma_i$, as $i=0,1$, i.e.
\begin{align*}\tag{\ref{Ktwo}}
\left\{\begin{array}{rlr}
 \partial_tu_{\gamma_0,\gamma_1} = &\!\!\!\! - \s x (1\!-\!x) \partial_xu_{\gamma_0,\gamma_1} - m_1 f_1 x \, \Jcal\!\left(u_{\gamma_0,\gamma_1},-\gamma_1 x\right) & 
\\[.15cm]
&\!\!  +m_0 f_0 (1\!-\!x) \, \Jcal\!\left(u_{\gamma_0,\gamma_1}, \gamma_0(1\!-\!x)\right)  , &  x\in[0,1], \, t>0 ,\\[.15cm]
u_{\gamma_0,\gamma_1}(x,0)=&\!\!\!\! x, & x\in[0,1],\, t=0. \end{array}\right.
\end{align*}

\subsection{Constant fitness: a modified quasispecies equation}\label{sec:Two_qs}
A detailed analysis has been carried out in \cite{ACNT} for the point-mutation model in the quasispecies case, where the fitness functions (and then also the selection spread) are constant.
To fix the ideas we take 
\[ \s>0, \] 
namely we label 1 the specie with lower fitness. In this setting it is not hard to show that the functions $u_{\gamma_0,\gamma_1}$ are convex w.r.t.~$x$ \cite[Lemma 4.2]{ACNT}. This fact has two relevant consequences.
Firstly,  point  mutations increase the survival opportunities of the low-fitness species, for any choice of $\gamma_0, \gamma_1$. Next, the family of functions $u_{\gamma_0,\gamma_1}$ is ordered both w.r.t.~$\gamma_0$ and $\gamma_1$.
\begin{proposition}\label{teogamma}
For any $(\gamma_0,\gamma_1)$, we have that $u_{\gamma_0,\gamma_1}(x,t)\ge v(x,t)$ for all $x,t$.
\\
For every $(x,t)$, the function $(\gamma_0,\gamma_1)\mapsto u_{\gamma_0,\gamma_1}(x,t)$ is continuous and nondecreasing w.r.t.~to both $\gamma_0$ and $\gamma_1$. 
\end{proposition}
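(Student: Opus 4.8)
The plan is to exploit the convexity in $x$ of the functions $u_{\gamma_0,\gamma_1}$ (\cite[Lemma 4.2]{ACNT}), after recasting \eqref{Ktwo} in a form in which the role of convexity is transparent. Using the definition \eqref{j0} of $\Jcal$ and $\lambda_i=m_i/\gamma_i$, a short computation shows that \eqref{Ktwo} is equivalent to
\begin{equation*}\partial_t u_{\gamma_0,\gamma_1}= -\s x(1-x)\,\partial_x u_{\gamma_0,\gamma_1}+m_1f_1\,\delta^-_{\gamma_1}[u_{\gamma_0,\gamma_1}]+m_0f_0\,\delta^+_{\gamma_0}[u_{\gamma_0,\gamma_1}],\end{equation*}
where $\delta^+_{\gamma}[\f](x):=\gamma^{-1}\big(\f(x+\gamma(1-x))-\f(x)\big)$ and $\delta^-_{\gamma}[\f](x):=\gamma^{-1}\big(\f((1-\gamma)x)-\f(x)\big)$: up to the factors $1-x$ and $-x$, these are the slopes of the chords of $\f$ joining $(x,\f(x))$ to the points $x+\gamma(1-x)$ and $(1-\gamma)x$. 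As $\gamma\to0^+$ those points tend to $x$, so $\delta^+_\gamma[\f]\to(1-x)\partial_x\f$ and $\delta^-_\gamma[\f]\to-x\,\partial_x\f$; moreover $x+\gamma(1-x)$ increases from $x$ to $1$ and $(1-\gamma)x$ decreases from $x$ to $0$ as $\gamma$ ranges over $[0,1]$. Hence, whenever $\f$ is convex in $x$, the monotonicity of the difference quotients of a convex function gives at once: (i) $\delta^+_\gamma[\f]\ge(1-x)\partial_x\f$ and $\delta^-_\gamma[\f]\ge-x\,\partial_x\f$; (ii) $\gamma\mapsto\delta^+_\gamma[\f]$ and $\gamma\mapsto\delta^-_\gamma[\f]$ are nondecreasing.

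For $u_{\gamma_0,\gamma_1}\ge v$, I would use that $u_{\gamma_0,\gamma_1}(\cdot,t)$ is convex: by (i) the rewritten equation turns into the differential inequality $\partial_t u_{\gamma_0,\gamma_1}\ge\big(-\s x(1-x)-m_1f_1x+m_0f_0(1-x)\big)\partial_x u_{\gamma_0,\gamma_1}$, that is, $u_{\gamma_0,\gamma_1}$ is a supersolution of \eqref{qs1}; since it has the same initial datum $x$ as the solution $v$, a comparison principle for \eqref{qs1} gives $u_{\gamma_0,\gamma_1}\ge v$. An equivalent route, which bypasses any discussion of boundary traces (note that $\Jcal$ only samples $u$ at points of $[0,1]$), is to observe that the vector field in \eqref{r-m} is convex in $x$, hence so is its flow $v$ (the second $x$-derivative of the flow of a scalar ODE solves a linear equation with nonnegative source when the field is convex); then by (i) $v$ is a subsolution of \eqref{Ktwo}, and the comparison principle for \eqref{Ktwo} gives $v\le u_{\gamma_0,\gamma_1}$.

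The monotonicity in $(\gamma_0,\gamma_1)$ follows similarly. Fix $\gamma_1$ and take $\gamma_0\le\gamma_0'$: applying (ii) to the convex function $u_{\gamma_0,\gamma_1}(\cdot,t)$ gives $\delta^+_{\gamma_0}[u_{\gamma_0,\gamma_1}]\le\delta^+_{\gamma_0'}[u_{\gamma_0,\gamma_1}]$, so the rewritten equation shows that $u_{\gamma_0,\gamma_1}$ is a subsolution of \eqref{Ktwo} with parameters $(\gamma_0',\gamma_1)$; as the initial datum is unchanged, the comparison principle for \eqref{Ktwo} yields $u_{\gamma_0,\gamma_1}\le u_{\gamma_0',\gamma_1}$. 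The same argument with $\delta^-$ handles $\gamma_1$, and composing the two one-variable inequalities shows that $u_{\gamma_0,\gamma_1}$ is nondecreasing in $\gamma_0$ and in $\gamma_1$.

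Continuity of $(\gamma_0,\gamma_1)\mapsto u_{\gamma_0,\gamma_1}(x,t)$ I would obtain by compactness: along any sequence $(\gamma_0^n,\gamma_1^n)\to(\gamma_0,\gamma_1)$ the functions $u_{\gamma_0^n,\gamma_1^n}$ are uniformly bounded and, by the regularity estimates of \cite{ACNT}, equicontinuous on compact subsets of $[0,1)\times[0,T]$; Arzel\`a--Ascoli together with uniqueness for \eqref{Ktwo} then forces the whole family to converge to $u_{\gamma_0,\gamma_1}$, locally uniformly, hence pointwise. I expect the main obstacle to be the limit $\gamma_i\to0^+$, where the intensity $\lambda_i=m_i/\gamma_i$ blows up while $\lambda_i\gamma_i=m_i$ stays fixed: to identify the limit with the solution $v$ of \eqref{qs1} one must show $\delta^+_{\gamma_0}[u_{\gamma_0,\gamma_1}]\to(1-x)\partial_x v$ and $\delta^-_{\gamma_1}[u_{\gamma_0,\gamma_1}]\to-x\,\partial_x v$, i.e.~convergence of first $x$-derivatives, which requires uniform-in-$\gamma$ control of $\partial_x u_{\gamma_0,\gamma_1}$ away from the singular point $x=1$ (where $\partial_x u(1,t)$ blows up, as recalled in the Introduction). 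Combined with the monotonicity already established---which guarantees that the monotone limit exists and dominates $v$---and a subsolution/lower-semicontinuity argument, this would pin the limit down to $v$ and conclude.
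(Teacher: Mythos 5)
Your argument is correct and follows essentially the same route the paper relies on (via its citations to \cite{ACNT}): the convexity of $u_{\gamma_0,\gamma_1}$ in $x$ makes the nonlocal increments dominate, and be monotone in $\gamma$ with respect to, the corresponding local terms, after which everything reduces to comparison principles for \eqref{qs1} and \eqref{Ktwo} — indeed your second route for the first statement is exactly the mechanism of the paper's proof of Proposition \ref{uvcomp} specialized to constant fitness. The only piece left at the level of a plan is the continuity in $(\gamma_0,\gamma_1)$, where you correctly isolate the delicate limit $\gamma_i\to 0^+$ requiring uniform control of $\partial_x u$ away from $x=1$; this is precisely the point the paper also defers to \cite[Proposition 4.6]{ACNT}.
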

The first statement is given in  \cite[Propositions  4.2]{ACNT}. The second one  is a straightforward extension of \cite[Proposition 4.6]{ACNT}. A relevant issue is the asymptotic behavior for  large time. 
It is well known that the replicator-mutator equation \eqref{qs1} has a constant asymptotic equilibrium  at $\bar x\in[0,1]$ singled out by the relation 
\begin{equation*}\label{qsequi}
\s \bar x(1-\bar x) - m_0 f_0(1-\bar x) + m_1 f_1 \bar x=0 .
\end{equation*}
We thus ask whether also the family of point mutation equations have a constant equilibrium and, in positive case, if this equilibrium is the same of the standard replicator-mutator, or rather depends by the value of the parameters $\gamma_i$. 
To give precise statements, let us split the range of the parameters $m_0, m_1$ into four sub-ranges:
\begin{itemize}
\item[$E:$]
$=\{0\}\times(0,1]$, the extinction range. Here  mutation is fair, because the mutated descendants have higher fitness than their progenitors. 
So mutation helps selection in fixing the higher type, and  $\bar x=0$ is a globally stable equilibrium for  the replicator-mutator equation.
\item[$F:$]
$=[\s/f_0,1]\times \{0\}$, the fixation range. Here mutation is unfair and happens with high probability, so that it is able to overwhelm selection and the high-fitness species extinguishes. According to the replicator-mutator model, $\bar x= 1$ is a globally stable equilibrium.
\item[$C:$]
$=(0,\s/f_0)\times\{0\} \cup (0,1]^2$, the coexistence range, where  both species survive and the equilibrium $\bar x\in(0,1)$.
We further distinguish two different sub-ranges.
\begin{itemize}
\item[$C_0:$]
$=(0,\s/f_0)\times\{0\}$. Here mutation is always unfair, but the quantity of mutated offspring is not sufficient to get rid of selection. 
An unstable equilibrium shows up at $x=1$, so that the basin of attraction of $\bar x=m_0f_0/\s$ is the set $[0,1)$ and $v(1,t)\equiv 1$.
\item[$C_1:$]
$=(0,1]^2$. Here mutation can go backwards and forwards, and $\bar x\in(0,1)$ is a globally stable equilibrium.
\end{itemize}
\end{itemize}

Let us explicitly remark that the replicator-mutator equilibrium $\bar x$  is globally stable for all values of the parameters $m_i$ outside $C_0$.
From a qualitative point of view, the whole family of equations \eqref{Ktwo} behave similarly. We summarize in the following proposition some results obtained in \cite{ACNT}.

\begin{proposition}[Asymptotic stability]\label{largetwo}
\begin{enumerate}[(i) ]
\item  If $(m_0,m_1) \in E\cup F\cup C_1$, then for any value of the parameters $\gamma_0,\gamma_1$  there exists $\bar u (\gamma_0,\gamma_1)\in[\bar x, 1]$ a globally stable equilibrium for equation \eqref{Ktwo}. Actually $u_{\gamma_0,\gamma_1}(x,t)$ converges to $\bar u(\gamma_0,\gamma_1)$ as $t\to\infty$, uniformly w.r.t.~$x\in[0,1]$. Moreover the speed of convergence to equilibrium is exponential, uniformly w.r.t.~$\gamma_o,\gamma_1$.
\item Suppose that $(m_0,m_1)\in C_0$ and $\gamma_0\in(0,1)$ is different from  the unique solution to the equation
\begin{equation}\label{eq_gammaStar}
\s\gamma+m_0f_0\log{(1-\gamma)}=0.
\end{equation}
Then there exists $\bar u(\gamma_0,0)\in[\bar x, 1)$ a locally stable equilibrium for equation \eqref{Ktwo}. Precisely $u(1,t)=1$ for every $t$, while $u_{\gamma_0,0}(x,t)$ converges to $\bar u(\gamma_0,0)$ as $t\to\infty$, uniformly w.r.t.~$x$ in any set of $[0,1-\delta]$ with $\delta\in(0,1)$. \\
Moreover the convergence rate to equilibrium is controlled by the estimate
\begin{equation}\label{unfair:ux_teo_estimate}
\left|\partial_tu(x,t)\right|\leq \frac{\const(\delta)}{(1-\gamma_0)^{\alpha}} e^{-\beta t} ,
\end{equation}
where $\alpha$ and $\beta$ are respectively the maximum point and the maximum value of the concave function
\begin{equation}\label{alphabeta}
\beta(\alpha)=  m_0f_0 \dfrac{1-(1-\gamma_0)^{1-\alpha}}{\gamma_0}+\s (\alpha-1), \qquad 0\le\alpha\le 2. 
\end{equation}
\end{enumerate}
\end{proposition}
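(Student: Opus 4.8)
The plan is to exploit that, in the constant-fitness (quasispecies) regime, \eqref{Ktwo} is linear and autonomous: writing its right-hand side as $\lcal_{\gamma_0,\gamma_1}u_{\gamma_0,\gamma_1}$, the coefficients do not depend on $t$, so $p:=\partial_tu_{\gamma_0,\gamma_1}$ solves the \emph{same} homogeneous equation $\partial_tp=\lcal_{\gamma_0,\gamma_1}p$ with initial datum $p(\cdot,0)=g$, where $g(x):=-\s x(1-x)+m_0f_0(1-x)-m_1f_1x$ is exactly the right-hand side of the replicator--mutator equation \eqref{r-m} (indeed $g=\lcal_{\gamma_0,\gamma_1}[\,\mathrm{id}\,]$, the increment $\Jcal$ of an affine function being its slope). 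Through $u_{\gamma_0,\gamma_1}(x,t)=x+\int_0^tp(x,s)\,ds$ the whole asymptotic picture reduces to decay estimates for $p$. From \cite{ACNT} I would take as given the well-posedness and parabolic-type regularity making $p$ a genuine solution, the comparison principle for $\lcal_{\gamma_0,\gamma_1}$ (constants, $0$ in particular, being stationary), and the $x$-convexity of $u_{\gamma_0,\gamma_1}$.

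For part (i) the ranges $E$ and $F$ fall to monotonicity: $g\le0$ on $[0,1]$ in $E$ and $g\ge0$ in $F$, so by comparison $p$ keeps a fixed sign, $u_{\gamma_0,\gamma_1}(x,\cdot)$ is monotone and $[0,1]$-valued hence convergent, and convexity (or a Dini argument) makes the convergence uniform in $x$. In $F$ the rate is immediate, $0\le 1-u_{\gamma_0,\gamma_1}(x,t)\le 1-v(x,t)$ with $v\to 1$ at the $\gamma$-independent rate of the flux \eqref{qs1}; in $E$ one compares with the extremal solution $u_{1,1}$ (Proposition \ref{teogamma} gives $u_{\gamma_0,\gamma_1}\le u_{1,1}$), whose boundary values $u_{1,1}(0,\cdot),u_{1,1}(1,\cdot)$ satisfy a closed $2\times2$ linear system and decay exponentially, whence $u_{1,1}$ does too. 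The remaining case $C_1$, where $g$ changes sign, and in general the $\gamma$-uniformity of the rate, is what I expect to be the main obstacle: I would aim for a contraction estimate showing that over a fixed time window the solution operator contracts the oscillation $\max_x u-\min_x u$ by a factor $<1$ uniformly in $(\gamma_0,\gamma_1)$ — using that the transport flow sweeps interior points toward $x=1$ while the nonlocal increments $x\mapsto(1-\gamma_1)x$ and $x\mapsto\gamma_0+(1-\gamma_0)x$ mix the value at $x$ with values nearer the endpoints — the uniformity being believable because as $\gamma_i\to0$ the increments localise but their coefficients $m_if_i/\gamma_i$ blow up and \eqref{Ktwo} degenerates to the flux equation \eqref{qs1}, which itself relaxes exponentially. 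Granting exponential decay of $p$, $u_{\gamma_0,\gamma_1}$ converges exponentially and uniformly in $x$ to a bounded stationary solution of \eqref{Ktwo}; this limit is spatially constant (a strong-maximum-principle type argument, or passing to the limit in monotonicity and convexity, excludes non-constant bounded steady states) and lies in $[\bar x,1]$ by the sandwich $v\le u_{\gamma_0,\gamma_1}\le 1$ together with $v\to\bar x$.

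For part (ii), in $C_0$ we have $m_1=0$, so only the forward nonlocal term survives. First, $u(1,t)\equiv1$: Proposition \ref{teogamma} gives $v\le u_{\gamma_0,0}\le1$ and in $C_0$ one has $v(1,t)\equiv1$, which forces $u_{\gamma_0,0}(1,t)=1$. Away from $x=1$ I pass again to $p=\partial_tu_{\gamma_0,0}$, with $p(x,0)=g(x)=(1-x)(m_0f_0-\s x)$ vanishing linearly at $x=1$. The key computation is that, using $1-(\gamma_0+(1-\gamma_0)x)=(1-\gamma_0)(1-x)$, the power function $\phi(x,t)=(1-x)^{1-\alpha}e^{-\beta(\alpha)t}$, with $\beta(\alpha)$ as in \eqref{alphabeta}, is a supersolution of the $p$-equation on $(0,1)$ for $\alpha\le1$: the transport term contributes $\s(1-\alpha)x\,\phi$, the nonlocal increment contributes $\frac{m_0f_0}{\gamma_0}\bigl((1-\gamma_0)^{1-\alpha}-1\bigr)\phi$, and taking $\beta=\beta(\alpha)$ makes the supersolution inequality tight at $x=1$ and valid for all $x\le1$. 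Since $|g(x)|\le\const(1-x)\le\const(1-x)^{1-\alpha}$ for $\alpha\in[0,2]$, comparison yields $|p(x,t)|\le\const(1-x)^{1-\alpha}e^{-\beta(\alpha)t}$; when the maximiser of $\beta$ exceeds $1$ the bare power barrier fails in the interior, and one must instead work near $x=1$ — where $u\equiv1$ is known — propagating inward, which is where the $\delta$- and $\gamma_0$-dependent constants $\const(\delta)$ and $(1-\gamma_0)^{-\alpha}$ of \eqref{unfair:ux_teo_estimate} arise. Optimising the concave $\beta(\alpha)$ over $[0,2]$ gives the exponent $\alpha$ and rate $\beta$; integrating $|\partial_tu|$ in $t$ then gives uniform convergence on every $[0,1-\delta]$ to a constant $\bar u(\gamma_0,0)\ge\bar x$ (sandwich with $v$). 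The strict inequality $\bar u(\gamma_0,0)<1$ — the low-fitness type does not fixate — and the whole argument hinge on the excluded value of $\gamma_0$ being the root of \eqref{eq_gammaStar}, which is precisely where $\beta'(1)=0$, so that $\max_\alpha\beta(\alpha)=\beta(1)=0$ and the rate collapses; for every other $\gamma_0\in(0,1)$ one has $\beta>0$.

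The delicate points I would watch throughout are: making the comparison principle rigorous for the nonlocal operator restricted to sub-intervals, since $x\mapsto\gamma_0+(1-\gamma_0)x$ maps $[0,1-\delta]$ onto a strictly larger interval; the regularity needed to differentiate \eqref{Ktwo} in $t$ near $x=1$, where $\partial_xu$ blows up; and, for part (i), securing the $(\gamma_0,\gamma_1)$-uniform contraction, which I regard as the principal difficulty.
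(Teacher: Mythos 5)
First, a point of reference: the paper itself does not prove Proposition \ref{largetwo} --- it is explicitly presented as a summary of results from \cite{ACNT}, and Remark \ref{conv_rate} records that the rate estimate \eqref{unfair:ux_teo_estimate} is obtained by inserting into \eqref{Ktwo} an estimate on $\partial_x u$ proved there. Your proposal must therefore be judged on its own merits, and much of it is sound: the reduction to $p=\partial_t u$ solving the same linear autonomous equation with datum $g$, the sign arguments in $E$ and $F$, the identity $u(1,t)\equiv 1$ in $C_0$ from $v\le u\le 1$, the supersolution computation for $(1-x)^{1-\alpha}e^{-\beta(\alpha)t}$, and above all the correct identification of $\gamma^*$ in \eqref{eq_gammaStar} as the root of $\beta'(1)=0$, where $\max_\alpha\beta=\beta(1)=0$ and the rate degenerates.

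Two genuine gaps remain. (1) In part (i), the case $C_1$ (where $g$ changes sign) together with the $\gamma$-uniformity of the exponential rate is precisely the nontrivial content of the statement, and you leave it as a ``contraction of the oscillation'' heuristic; nothing in the proposal actually controls the oscillation decay uniformly as $\gamma_i\to 0$, beyond the observation that the limit equation \eqref{qs1} relaxes exponentially. As written this is a plan, not a proof. (2) In part (ii), your barrier for $p=\partial_t u$ is a supersolution only for $\alpha\le 1$ (for $\alpha>1$ the inequality fails at $x=0$, as you note), whereas the maximiser of the concave function \eqref{alphabeta} lies in $(1,2]$ exactly when $\beta'(1)=\s+\tfrac{m_0f_0}{\gamma_0}\log(1-\gamma_0)>0$, i.e.\ for every $\gamma_0<\gamma^*$. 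So your argument yields the claimed optimal rate only on half the parameter range, and the proposed repair (``work near $x=1$ and propagate inward'') is not substantiated. The route indicated by Remark \ref{conv_rate} avoids this: differentiating \eqref{Ktwo} in $x$ rather than in $t$ gives for $w=\partial_x u$ an equation with the extra zeroth-order coefficient $-\s(1-2x)$ and the factor $(1-\gamma_0)$ multiplying the shifted term, and then $(1-x)^{-\alpha}e^{-\beta t}$ is a supersolution for the $w$-equation for \emph{every} $\alpha\in[0,2]$ with the same $\beta(\alpha)$ --- which is why \eqref{alphabeta} is optimised over $[0,2]$ --- after which $|\partial_t u|$ is read off from the equation itself. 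Switching from the $t$-derivative to the $x$-derivative at this point would close gap (2); gap (1) still requires the quantitative relaxation argument of \cite{ACNT}.
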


\begin{remark}\label{remark:gamma*}
 If $(m_0,m_1)\in C_0$ and $\gamma^*$ satisfies \eqref{eq_gammaStar}, then we are only able to prove that $u_{\gamma_0,0}(x,t)$ converges to a continuous, nondecreasing and convex function $\bar u (x)$, as $t\to\infty$, uniformly w.r.t.~$x$ in any closed subset of $[0,1)$. 
We conjecture that, even though the arguments of the proof of item {\emph{(ii)}} do not apply, the same conclusion holds also in this case. 
In the following we report numerical simulations that go into this direction, see Section \ref{3}.
\end{remark}

\begin{remark}\label{conv_rate}
The estimate of $\partial_tu$ \eqref{unfair:ux_teo_estimate} is obtained by plugging into equation \eqref{Ktwo} the estimate of $\partial_xu$ obtained in \cite[proof of Proposition 4.7]{ACNT}. The constant $\const(\delta)$ can be accurately computed. 
In particular, when interested in small values of $x$ (precisely for $(\s-m_0f_0)/\s<\delta <1)$), we have $\const(\delta)= 2m_0f_0 \, \max\{1,\delta^{1-\alpha}\}$. We shall use this constant in the following chapter, to  compare the theoretical asymptotic rate \eqref{unfair:ux_teo_estimate} with the numerical one.
\end{remark}

We next address to a more quantitative aspect, and ask  if the equilibria of the point type mutations, $\bar u(\gamma_0,\gamma_1)$, actually depend on $\gamma_0,\gamma_1$ and differ from the equilibrium of standard mutations, $\bar x$.
We examine separately the four sub-regions.

In the extinction region $E$, the function $u_{0,\gamma_1}$ stays between the pure selection replicator model \eqref{r1}, and the selection-mutation model \eqref{qs1} (see \cite[Proposition 4.4]{ACNT}). Therefore $\bar u(0,\gamma_1)=0=\bar x$.

In the fixation region $F$, the high-fitness species extinguishes according to both replicator-mutator and point-mutation models, i.e.~$\bar u(\gamma_0,0)=1=\bar x $ for any $\gamma_0$ (see \cite[Proposition 4.5]{ACNT}). 

A new scenario arises in $C$, where the standard  mutator-replicator model provides that the  two species coexist. The same holds for the point-mutation model, although the composition of the mixed population at equilibrium is different. The concentration in time of mutations, exhibited by the point-process model, favours the low fitness species. When $(\gamma_0,\gamma_1)$ approaches $(0,0)$ (continuously distributed mutations), the frequency of the lower fitness type at equilibrium gets its minimum, which is the exact value $\bar x$ of the standard quasispecies equation. On the other side, when $(\gamma_0,\gamma_1)$ approaches $(1,1)$ (very concentrated mutations), it reach its maximum, given by ${m_0f_0}/(m_0f_0+m_1f_1) > \bar x$.
It is remarkable that the asymptotic equilibrium is $\bar u_{1,0}=1$ for $(m_0,m_1)\in C_0$ and $\gamma_0=1$: in this limit case, the time concentration of mutations is sufficient to overwhelm selection and cause the extinction of the type with higher fitness.
Precisely we have that
\begin{proposition}\label{gammateo}
Take $(m_0,m_1)\in C_0$, i.e.~$0<m_0<\s/f_0$ and $m_1=0$.
For every $(x,t)$, the function $(0,1]\ni\gamma_0\mapsto u_{\gamma_0,0}(x,t)$ is nondecreasing and continuous, with
\[ \lim\limits_{\gamma_0\to 0} u_{\gamma_0,0}(x,t)=v(x,t)  \qquad \text{and} \qquad  \lim\limits_{\gamma_0\to 1} u_{\gamma_0,0}(x,t)=u_{1,0}(x,t). \]
Concerning the asymptotic equilibrium, we have
\[ \lim\limits_{\gamma_0\to 0} \bar u(\gamma_0,0)= \bar x =\frac{m_0f_0}{\s}  \qquad \text{and} \qquad \lim\limits_{\gamma_0\to 1} \bar u(\gamma_0,0)=1. \]
Moreover 
\begin{equation}\label{teo_bound}
\bar u(\gamma_0,0) \le \bar{x}/(1-\gamma_0)
\end{equation}
for all $\gamma_0\in(0,1-\bar x)$.
 \end{proposition}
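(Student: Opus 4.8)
\emph{Plan of proof.} The statement is made of three blocks: (a) the monotonicity, continuity and the two endpoint limits of $\gamma_0\mapsto u_{\gamma_0,0}(x,t)$; (b) the estimate \eqref{teo_bound}; (c) the two limits of $\bar u(\gamma_0,0)$. I would deduce (a) from Proposition~\ref{teogamma} and the continuous–dependence results it extends, prove (b) through the probabilistic reading of $u$, and then obtain (c) from (b) together with Propositions~\ref{teogamma} and~\ref{largetwo}. As for (a): since $m_1=0$ the term carrying $\gamma_1$ in \eqref{Ktwo} vanishes, so $u_{\gamma_0,0}$ depends only on $\gamma_0$, and Proposition~\ref{teogamma} already gives that $\gamma_0\mapsto u_{\gamma_0,0}(x,t)$ is continuous and nondecreasing on $(0,1]$, with $u_{\gamma_0,0}\ge v$. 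By this monotonicity and the bound $0\le u_{\gamma_0,0}\le 1$, the pointwise limits as $\gamma_0\to0^+$ and $\gamma_0\to1^-$ exist; they are identified by letting $\gamma_0$ vary in \eqref{Ktwo}: as $\gamma_0\to0^+$ the increment $\gamma_0(1-x)$ shrinks to $0$, $\Jcal(\,\cdot\,,\gamma_0(1-x))\to\partial_x$, and \eqref{Ktwo} degenerates into \eqref{qs1} with $m_1=0$, whose solution is $v$, while at $\gamma_0=1$ the coefficient $\lambda_0=m_0/\gamma_0$ is regular so the limit is simply $u_{1,0}$. These passages are of the kind carried out in \cite[Proposition 4.6]{ACNT}, of which Proposition~\ref{teogamma} is an extension, the uniform–in–$\gamma_0$ estimates there upgrading pointwise to locally uniform convergence and making the limit in the equation legitimate.

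\emph{(b)} This is the heart of the matter. Recall that $u_{\gamma_0,0}(x,t)=\mathbb{E}_x[X_t]$, the expectation of the stochastic frequency $X_t$ of type $1$ underlying \eqref{Ktwo}: between mutation events $\dot X=-\s X(1-X)$, and at rate $r:=m_0f_0/\gamma_0$ the process jumps from $X$ to $(1-\gamma_0)X+\gamma_0$. The function $\Phi(x):=-\log(1-x)$ is tailor-made for this generator: $\Phi'(x)=1/(1-x)$ exactly cancels the degeneracy of the drift, so $-\s x(1-x)\Phi'(x)=-\s x$, while the jump increment is a constant, $\Phi\bigl((1-\gamma_0)x+\gamma_0\bigr)-\Phi(x)=-\log(1-\gamma_0)$; hence the generator gives $\mathcal{A}\Phi(x)=-\s x+r(-\log(1-\gamma_0))$ on $[0,1)$. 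Dynkin's formula — applied after a localization at the exit times of $[0,1-\varepsilon]$, which is harmless because the paths of $X$ stay strictly below $1$ and these exit times increase to $+\infty$ — then yields
\begin{equation*}
0\le\mathbb{E}_x\bigl[\Phi(X_t)\bigr]=\Phi(x)-\s\int_0^t u_{\gamma_0,0}(x,s)\,ds+r\,(-\log(1-\gamma_0))\,t .
\end{equation*}
Dividing by $t$, letting $t\to\infty$ and using that $u_{\gamma_0,0}(x,s)\to\bar u(\gamma_0,0)$ as $s\to\infty$ (item (ii) of Proposition~\ref{largetwo}; one checks $\gamma^{*}>1-\bar x$, so this covers all $\gamma_0\in(0,1-\bar x)$), a Cesàro argument gives $\s\,\bar u(\gamma_0,0)\le r\,(-\log(1-\gamma_0))$, that is $\bar u(\gamma_0,0)\le\bar x\,\frac{-\log(1-\gamma_0)}{\gamma_0}$. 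Finally the elementary inequality $-s\log s<1-s$ for $s\in(0,1)$ — the function $1-s+s\log s$ vanishes at $s=1$ and has derivative $\log s<0$ — applied with $s=1-\gamma_0$ gives $\bar x\,\frac{-\log(1-\gamma_0)}{\gamma_0}<\frac{\bar x}{1-\gamma_0}$, which is \eqref{teo_bound}. (Staying within the PDE framework, the same follows by checking that, since $u_{\gamma_0,0}$ solves \eqref{Ktwo}, the function $q(x,t):=-\log(1-x)-\s\int_0^t u_{\gamma_0,0}(x,s)\,ds+r(-\log(1-\gamma_0))t$ solves \eqref{Ktwo} with $m_1=0$ and initial datum $-\log(1-x)\ge0$, hence $q\ge0$ by comparison.)

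\emph{(c)} From $u_{\gamma_0,0}\ge v$ and $v(x,t)\to\bar x$ for every $x\in[0,1)$ (true in $C_0$) one gets $\bar u(\gamma_0,0)\ge\bar x$; together with \eqref{teo_bound} this sandwiches $\bar x\le\bar u(\gamma_0,0)\le\bar x/(1-\gamma_0)$, so $\bar u(\gamma_0,0)\to\bar x$ as $\gamma_0\to0^+$. For $\gamma_0\to1^-$, Proposition~\ref{teogamma} (with $t\to\infty$) makes $\gamma_0\mapsto\bar u(\gamma_0,0)$ nondecreasing and $\le1$, so the limit exists; it equals $1$ thanks to the lower bound $\bar u(\gamma_0,0)=\lim_t\mathbb{E}_x[X_t]\ge\mathbb{E}\bigl[\zeta_{\gamma_0}(T)\bigr]$, where $\zeta_{\gamma_0}$ is the flow of $\dot\zeta=-\s\zeta(1-\zeta)$ issued from $\gamma_0$ and $T\sim\mathrm{Exp}(m_0f_0/\gamma_0)$ is independent (after any mutation the frequency is $\ge\gamma_0$ and thereafter only decreases, and the time since the last mutation converges, as $t\to\infty$, to an exponential of parameter $m_0f_0/\gamma_0$); since $\zeta_{\gamma_0}\to1$ and $m_0f_0/\gamma_0$ stays bounded, $\mathbb{E}[\zeta_{\gamma_0}(T)]\to1$ by dominated convergence.

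The step I expect to be the main obstacle is the honest treatment of \eqref{teo_bound}: the test function $-\log(1-x)$ blows up at $x=1$, so Dynkin's formula (equivalently, the comparison argument for $q$) has to be set up with care through the localization indicated above, using that the trajectories of $X$ never reach $1$ and that the exit times of $[0,1-\varepsilon]$ diverge. A lesser point is the passage to the limit in \eqref{Ktwo} in part (a) and the dominated-convergence estimate in (c), both of which become routine once the uniform-in-$\gamma_0$ regularity bounds of \cite{ACNT} are invoked.
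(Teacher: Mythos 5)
Your proposal is correct, but for the key estimate \eqref{teo_bound} it takes a genuinely different route from the paper's. The paper obtains \eqref{teo_bound} as a ``slight variation'' of the proof of \cite[Proposition 4.8]{ACNT}: one writes down the explicit solution $\hat v$ of the replicator--mutator equation \eqref{qs1} with the boosted mutation rate $m_0/(1-\gamma_0)$, whose equilibrium is exactly $\hat x=\bar x/(1-\gamma_0)$, checks by direct computation that $\hat v$ is a supersolution of \eqref{Ktwo}, and concludes $u_{\gamma_0,0}\le\hat v$ by comparison, whence $\bar u(\gamma_0,0)\le\hat x$ as $t\to\infty$; this needs $\gamma_0<1-\bar x$ so that $\hat x\le 1$, but it controls $u_{\gamma_0,0}(x,t)$ at every finite time. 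You instead use the Lyapunov function $\Phi(x)=-\log(1-x)$, which linearizes both the drift and the jump of the underlying process, together with Dynkin's formula (equivalently your explicit solution $q$ with datum $\Phi$); this yields the sharper bound $\bar u(\gamma_0,0)\le\bar x\,(-\log(1-\gamma_0))/\gamma_0$, valid for every $\gamma_0\in(0,1)$, from which \eqref{teo_bound} follows by the elementary inequality you quote. A pleasant byproduct of your computation is the observation that $\gamma^*>1-\bar x$, so that Proposition \ref{largetwo}(ii) covers the whole range of \eqref{teo_bound} --- a point the paper leaves implicit. The trade-offs: the paper's argument stays entirely within the PDE comparison framework of \cite{ACNT} and gives a time-dependent barrier; yours is shorter once the probabilistic representation $u_{\gamma_0,0}(x,t)=\mathbb{E}_x[X_t]$ is granted and gives a better constant, but it only bounds the equilibrium (it leans on the convergence $u_{\gamma_0,0}(x,t)\to\bar u(\gamma_0,0)$ from Proposition \ref{largetwo}) and requires the localization you rightly flag, since $\Phi$ blows up at $x=1$. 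Parts (a) and (c) of your plan coincide in substance with the paper's, which simply cites \cite[Propositions 4.6 and 4.8]{ACNT}; your sandwich $\bar x\le\bar u(\gamma_0,0)\le\bar x/(1-\gamma_0)$ for the limit as $\gamma_0\to 0$ and the renewal argument for $\gamma_0\to 1$ are sound self-contained substitutes.
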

The  first two sentences have been proved, respectively, in \cite[Propositions 4.6 and 4.8]{ACNT}. The estimate \eqref{teo_bound} follows by  a slight variation of the proof of \cite[Proposition 4.8]{ACNT}.
\remove{For $\gamma_0\in(0,1-\bar x)$, we take into account  the solution to \eqref{qs1} with parameters $m_0=\lambda_{0} \gamma_0 /(1-\gamma_0)$, $m_1=0$, namely the function
\[ \hat v(x,t)= \hat x  + \dfrac{x-\hat x}{1+\dfrac{1-x}{1-\hat x}\left(e^{\s(1-\hat x)t}-1\right)} ,\]
where $\hat x = \bar{x}/(1-\gamma_0)$.
Computing equation \eqref{Ktwo} along $\hat v$ gives
\begin{align*}
\partial_t\hat v + \s x (1-x) \partial_x\hat v -  \lambda_{0} f_0 \left[ \hat v(x+\gamma(1-x),t) - \hat v(x,t) \right] \\
= \s \hat x \left[ \gamma(1-x)  \partial_x\hat v - (1-\gamma)\left(\hat v(x+\gamma(1-x),t) - \hat v(x,t) \right)\right] \\
=\dfrac{ \s \gamma \hat x  (1-x)e^{\s(1-\hat x)t}}{\left(1+\dfrac{1-x}{1-\hat x}\left(e^{\s(1-\hat x)t}-1\right)\right)^2\left(1+(1-\gamma)\dfrac{1-x}{1-\hat x}\left(e^{\s(1-\hat x)t}-1\right)\right)} \ge 0,
\end{align*} 
and therefore by comparison $u_{\gamma_0,1}(x,t)\le \hat v (x,t)$ for every $x$ and $t$. 
The thesis follows because for any $x<1$ we have $\bar u(\gamma_0,1)= \lim\limits_{t\to\infty}u_{\gamma_0,0}(x,t) \le \lim\limits_{t\to\infty}\hat v (x,t)=\hat x$.
}
These results can be easily extended to  the internal region $C_1$, where mutations can happen from 0 to 1 and back from 1 to 0.  This latter case, apparently more general than the previous one, is actually easier to handle because we have a global equilibrium and $x$-continuity is preserved when $t\to+\infty$.
\begin{proposition}Take $(m_0,m_1)\in C_1$.
For every $(x,t)$, the function $(0,1]^2\ni (\gamma_0,\gamma_1)\mapsto u_{\gamma_0,\gamma_1}(x,t)$ is nondecreasing and continuous, with respect to $\gamma_0$ and $\gamma_1$, separately. Moreover
\[\begin{array}{cc}
 \lim\limits_{(\gamma_0,\gamma_1)\to (0,0)} u_{\gamma_0,\gamma_1}(x,t)=v(x,t)  , \qquad & \qquad \lim\limits_{(\gamma_0,\gamma_1)\to (1,1)} u_{\gamma_0,\gamma_1}(x,t)=u_{1,1}(x,t).
\end{array}\]
Concerning the asymptotic equilibrium, we have
\[\begin{array}{cc}
 \lim\limits_{(\gamma_0,\gamma_1)\to (0,0)} \bar u_{\gamma_0,\gamma_1}=\bar x  , \qquad & \qquad \lim\limits_{(\gamma_0,\gamma_1)\to (1,1)} \bar u_{\gamma_0,\gamma_1}=\dfrac{m_0f_0}{m_0f_0+m_1f_1}.
\end{array}\]
\end{proposition}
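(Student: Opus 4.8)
The plan is to read off the monotonicity and the separate continuity directly from Proposition~\ref{teogamma}, to reduce the two corner limits to one-parameter limits along the diagonal $\gamma_0=\gamma_1=:\gamma$ by a monotone squeezing, and to pin down the equilibrium value at $(1,1)$ through a closed ordinary differential equation satisfied by the boundary traces of $u_{1,1}$.

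Since Proposition~\ref{teogamma} is valid throughout the constant-fitness regime, and $C_1$ lies in that regime, the map $(\gamma_0,\gamma_1)\mapsto u_{\gamma_0,\gamma_1}(x,t)$ is nondecreasing in each variable, separately continuous, and bounded below by $v(x,t)$; in particular $u_{\gamma_0,\gamma_1}\le u_{1,1}$. Hence, for fixed $(x,t)$, $\lim_{(\gamma_0,\gamma_1)\to(0,0)}u_{\gamma_0,\gamma_1}(x,t)$ is squeezed between $v(x,t)$ and $\lim_{\gamma\to0}u_{\gamma,\gamma}(x,t)$, and $\lim_{(\gamma_0,\gamma_1)\to(1,1)}u_{\gamma_0,\gamma_1}(x,t)$ between $\lim_{\gamma\to1}u_{\gamma,\gamma}(x,t)$ and $u_{1,1}(x,t)$, so it suffices to study the two diagonal limits. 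The limit $\gamma\to1$ is the benign one: $\lambda_i=m_i/\gamma_i\to m_i$ and the finite increments $\Jcal(u,-\gamma x)$, $\Jcal(u,\gamma(1-x))$ depend continuously on $\gamma$ up to $\gamma=1$ --- including at the endpoints $x=0,1$, where they converge to the boundary differences --- so $u_{\gamma,\gamma}\to u_{1,1}$ by continuous dependence on the parameters, and $u_{\gamma_0,\gamma_1}(x,t)\to u_{1,1}(x,t)$ follows.

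The limit $\gamma\to0$ is singular: $\lambda_i=m_i/\gamma\to\infty$ and the increments degenerate into derivatives, $m_1f_1x\,\Jcal(u,-\gamma x)\to m_1f_1x\,\partial_xu$ and $m_0f_0(1-x)\,\Jcal(u,\gamma(1-x))\to m_0f_0(1-x)\,\partial_xu$, so that \eqref{Ktwo} formally collapses onto \eqref{qs1}. I would make this rigorous by a stability argument: $\{u_{\gamma,\gamma}\}_{\gamma>0}$ is equibounded in $[0,1]$, convex in $x$, and --- crucially, because the $C_1$ equilibrium is interior, so $\partial_xu$ stays bounded up to $x=1$ --- equi-Lipschitz in $x$ and equicontinuous in $t$ uniformly in $\gamma$ (these $\gamma$-uniform estimates are the ones of \cite{ACNT}, and they are better behaved in $C_1$ than in $C_0$). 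By Ascoli-Arzel\`a some subsequence converges locally uniformly to a limit $u^\ast$; passing to the limit in the equation shows $u^\ast$ solves \eqref{qs1}, and uniqueness for that linear transport problem forces $u^\ast=v$. Thus $u_{\gamma,\gamma}\downarrow v$, and the squeeze gives $u_{\gamma_0,\gamma_1}(x,t)\to v(x,t)$.

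For the asymptotic equilibria I would exchange the limits in $\gamma$ and $t$. Proposition~\ref{largetwo}(i) provides $|u_{\gamma_0,\gamma_1}(x,t)-\bar u_{\gamma_0,\gamma_1}|\le\const\,e^{-\beta t}$ with $\const,\beta$ independent of $\gamma_0,\gamma_1$; letting first $(\gamma_0,\gamma_1)\to(0,0)$, using the limit just proved together with $v(x,t)\to\bar x$ (valid since $\bar x$ is globally stable in $C_1$), and then $t\to\infty$, gives $\bar u_{\gamma_0,\gamma_1}\to\bar x$. For the corner $(1,1)$ one first needs $\bar u_{1,1}$, and this is the only real subtlety: at $\gamma_0=\gamma_1=1$ every constant is stationary, so the value must come from the structure of the equation. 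With $\gamma_0=\gamma_1=1$ the nonlocal terms collapse and \eqref{Ktwo} becomes
\begin{equation*}
\partial_tu_{1,1}=-\s x(1-x)\,\partial_xu_{1,1}-m_1f_1\big(u_{1,1}(x,t)-u_{1,1}(0,t)\big)+m_0f_0\big(u_{1,1}(1,t)-u_{1,1}(x,t)\big);
\end{equation*}
evaluating at $x=0$ and $x=1$ and writing $a(t)=u_{1,1}(0,t)$, $b(t)=u_{1,1}(1,t)$ gives the closed linear system $\dot a=m_0f_0(b-a)$, $\dot b=-m_1f_1(b-a)$, with $a(0)=0$, $b(0)=1$. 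Then $m_1f_1\,a(t)+m_0f_0\,b(t)$ is conserved and equals $m_0f_0$, while $b(t)-a(t)=e^{-(m_0f_0+m_1f_1)t}\to0$; combined with the global stability of Proposition~\ref{largetwo}(i), which forces $a(t),b(t)\to\bar u_{1,1}$, this yields $\bar u_{1,1}=m_0f_0/(m_0f_0+m_1f_1)$. The same limit-exchange, now using $u_{1,1}(x,t)\to\bar u_{1,1}$, then gives $\bar u_{\gamma_0,\gamma_1}\to m_0f_0/(m_0f_0+m_1f_1)$ as $(\gamma_0,\gamma_1)\to(1,1)$. I expect the main obstacle to be exactly this identification of $\bar u_{1,1}$ --- recovering it requires the conservation law built into the boundary ODE, since the stationary equation alone does not select it --- with the vanishing-$\gamma$ singular limit a close second, though that one is tamed by the interior-equilibrium estimates available in $C_1$.
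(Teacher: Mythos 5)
Your proposal is correct and is, in substance, the route the paper intends: the proposition is stated without proof, presented as an easy extension of Propositions \ref{teogamma} and \ref{gammateo}, and the ingredients you use --- monotonicity and separate continuity in each $\gamma_i$, the comparison $u_{\gamma_0,\gamma_1}\ge v$, and the $\gamma$-uniform exponential convergence of Proposition \ref{largetwo}(i) that legitimises exchanging the limits in $(\gamma_0,\gamma_1)$ and $t$ --- are exactly the ones the paper (via \cite{ACNT}) relies on. Your diagonal squeeze and, above all, your identification of $\bar u_{1,1}$ through the closed boundary system $\dot a=m_0f_0(b-a)$, $\dot b=-m_1f_1(b-a)$ with the conserved quantity $m_1f_1a+m_0f_0b\equiv m_0f_0$ are correct (recall $f_0,f_1$ are constants here) and supply a detail the paper leaves entirely implicit; this is a genuine addition. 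The one step to tighten is the singular limit $\gamma\to 0$: locally uniform convergence together with an equi-Lipschitz bound does not by itself imply that the difference quotients $\Jcal(u_{\gamma,\gamma},\gamma(1-x))$ converge to $\partial_x u^{*}$. You should either invoke the convexity of $u_{\gamma,\gamma}$ in $x$ (which traps each difference quotient between one-sided derivatives and yields convergence of derivatives wherever the convex limit is differentiable), or pass to the limit in the weak or viscosity sense; this is how the corresponding $C_0$ statement is justified in \cite{ACNT}. With that repair the argument is complete.
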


These analytical arguments do not allow to compute the asymptotic equilibrium $\bar u(\gamma_0,\gamma_1)$, nor to see if it depends continuously by the parameters $\gamma_0, \gamma_1$. It could also happen, at the contrary, that there is a bifurcation value which separates a set of models that converge to the quasispecies equilibrium $\bar x$ from another one which brings to extinction of the high-fitness specie. 
The numerical simulations, produced in the following chapter, suggest that the asymptotic equilibrium  $\bar u(\gamma_0,\gamma_1)$ spans  the segment line between $\bar x$ and ${m_0f_0}/({m_0f_0+m_1f_1})$.

\subsection{Density dependent fitness.}\label{sec:density_dependent_fitness}

In evolutionary theory, the fitness functions $f_0$ and $f_1$ are assumed to depend on the population density  in a linear way.
A ``payoff matrix'' with nonnegative entries is introduced
\[ A = \left(\begin{array}{cc} a_{0} & b_{0} \\ a_{1} & b_{1}\end{array}\right),
\]
and the fitness functions are defined by means of 
\[ f_i(x)= a_{i}(1-x)+ b_{i}x, \quad \mbox{ as $i=0,1$.}\] 
This makes the analytical study of equation  \eqref{Ktwo} much more complicate: well-posedness established in \cite{ACNT} still applies, but the qualitative results concerning quasispecies do not extend in general.
In particular, the selection spread $\s(x)= f_0(x)-f_1(x)$ depends itself by $x$, therefore it possibly changes sign and there is not a clear separation between fair and unfair mutations.
The following comparison result states that point-type mutations can even punish the low-fitness type, if compared with the standard replicator-mutator. This can happen when the global amount of mutations is not enough to balance the increasing of the fitness of the other type.

\begin{proposition}\label{uvcomp}
If 
\[ m_0 (b_0-a_0) + m_1 (b_1-a_1) \ge \max \{ 2(a_0-a_1)-(b_0-b_1), -(a_0-a_1)+2(b_0-b_1)\},\]
 then $u_{\gamma_0,\gamma_1}\ge v$  pointwise, for any value of $\gamma_0$ and $\gamma_1$.
On the contrary, if 
\[ m_0 (b_0-a_0) + m_1 (b_1-a_1) \le \min  \{ 2(a_0-a_1)-(b_0-b_1), -(a_0-a_1)+2(b_0-b_1)\},\]
then $u_{\gamma_0,\gamma_1}\le v$  pointwise, for any value of $\gamma_0$ and $\gamma_1$.
\end{proposition}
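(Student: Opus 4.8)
The plan is to realize $v$ as a sub- or a supersolution of the integro-differential problem \eqref{Ktwo} and then to conclude by the comparison principle for \eqref{Ktwo} proved in \cite{ACNT}. Write $\mathcal{L}$ for the operator on the right-hand side of \eqref{Ktwo}, so that $\partial_t u_{\gamma_0,\gamma_1}=\mathcal{L}[u_{\gamma_0,\gamma_1}]$, while $v$ solves the transport problem \eqref{qs1}, $\partial_t v=g(x)\,\partial_x v$ with $g(x):=-\s x(1-x)-m_1f_1x+m_0f_0(1-x)$. First I would evaluate $\mathcal{L}$ along $v$ and subtract: the pure transport terms $-\s x(1-x)\partial_x v$ cancel, and, writing $\Jcal(v,z)=\frac1z\int_0^z\partial_xv(x+s)\,ds$ for the mean of $\partial_xv$ over the segment joining $x$ to $x+z$, one is left with
\[
\partial_t v-\mathcal{L}[v]=m_1f_1x\bigl(\Jcal(v,-\gamma_1x)-\partial_xv\bigr)+m_0f_0(1-x)\bigl(\partial_xv-\Jcal(v,\gamma_0(1-x))\bigr).
\]
Since the payoff entries are nonnegative, $f_0,f_1\ge0$, and the sign of this residual is governed by the convexity of $x\mapsto v(x,t)$: if $v$ is convex then $\Jcal(v,-\gamma_1x)\le\partial_xv$ (a mean over a segment to the left of $x$) while $\Jcal(v,\gamma_0(1-x))\ge\partial_xv$ (a mean over a segment to the right), so both brackets are $\le0$, $v$ is a subsolution, and $u_{\gamma_0,\gamma_1}\ge v$; if $v$ is concave the inequalities reverse, $v$ is a supersolution, and $u_{\gamma_0,\gamma_1}\le v$.

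Everything then reduces to deciding the convexity of $v$, which I would read off the sign of $g''$ on $[0,1]$. The replicator-mutator field points inward at the endpoints, $g(0)=m_0f_0(0)\ge0$ and $g(1)=-m_1f_1(1)\le0$, so $[0,1]$ is invariant under $\dot z=g(z)$ and $v(\cdot,t)$ is precisely the time-$t$ flow map $\Phi_t$ of this ODE; differentiating $\Phi_t$ twice with respect to its argument, $\psi:=\partial_{xx}v$ obeys the linear equation $\partial_t\psi=g'(v)\,\psi+g''(v)\,(\partial_xv)^2$ with $\psi(\cdot,0)=0$ (the datum $v(x,0)=x$ being affine), and since $\partial_xv>0$ this forces $\psi$ to inherit the sign of $g''$ along trajectories — equivalently one may differentiate \eqref{qs1} twice in $x$ and run a maximum principle on the resulting linear transport equation for $\partial_{xx}v$. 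Because $f_0,f_1$ and $\s=f_0-f_1$ are affine, $g$ is a cubic and $g''$ is affine, hence one-signed on $[0,1]$ iff it is so at the two endpoints; a direct computation gives
\[
g''(0)=2\bigl(2(a_0-a_1)-(b_0-b_1)-m_0(b_0-a_0)-m_1(b_1-a_1)\bigr),\qquad g''(1)=2\bigl(-(a_0-a_1)+2(b_0-b_1)-m_0(b_0-a_0)-m_1(b_1-a_1)\bigr).
\]
Thus $g''\ge0$ on $[0,1]$ is exactly the inequality $m_0(b_0-a_0)+m_1(b_1-a_1)\le\min\{\,2(a_0-a_1)-(b_0-b_1),\,-(a_0-a_1)+2(b_0-b_1)\,\}$, while $g''\le0$ on $[0,1]$ is exactly the reversed $\max$-inequality; feeding this back into the residual identity of the first step yields the two pointwise comparisons between $u_{\gamma_0,\gamma_1}$ and $v$.

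The step I expect to cost the most care is verifying that $v$ is an admissible comparison function for the nonlocal operator $\mathcal{L}$ in the sense required by the comparison principle of \cite{ACNT} — in particular its behaviour at the degenerate endpoints $x=0,1$, where the transport coefficient $-\s x(1-x)$ of \eqref{Ktwo} vanishes and the nonlocal terms degenerate because their coefficients $m_1f_1x$ and $m_0f_0(1-x)$ vanish — together with the (routine but not entirely automatic) propagation for all $t>0$ of the sign of $\partial_{xx}v$ starting from $\partial_{xx}v\equiv0$ at $t=0$. By contrast, the algebra producing the two endpoint values of $g''$ and their matching with the $\max/\min$ thresholds, and the convexity/concavity bookkeeping in the residual, are purely mechanical.
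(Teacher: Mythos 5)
Your strategy is the same as the paper's: evaluate the equation of \eqref{Ktwo} along $v$, observe that the residual is controlled by the sign of $\partial_{xx}v$ (you phrase this through $\Jcal(v,z)$ being an average of $\partial_x v$, the paper through the integral form of the first--order Taylor remainder --- these are the same identity), reduce the convexity of $v$ to the sign of the second derivative of the drift, which is affine in $x$ so that its sign on $[0,1]$ is decided at the two endpoints, and conclude by comparison. Your endpoint values $g''(0)$ and $g''(1)$ are correct, as is the chain ``$v$ convex $\Rightarrow \Jcal(v,-\gamma_1x)\le\partial_xv\le\Jcal(v,\gamma_0(1-x)) \Rightarrow v$ subsolution $\Rightarrow u_{\gamma_0,\gamma_1}\ge v$'' (the nonnegativity of the payoff entries, which you invoke for the signs of the coefficients $m_1f_1x$ and $m_0f_0(1-x)$, is indeed needed). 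Your Lagrangian derivation of the equation for $\partial_{xx}v$ via the flow map is a legitimate alternative to the paper's Eulerian double differentiation, and your caution about the admissibility of $v$ for the nonlocal comparison principle of \cite{ACNT} is reasonable but not a real obstacle.

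The one substantive issue is that your (correct) bookkeeping attaches the conclusions to the opposite hypotheses from the printed statement: you obtain $m_0(b_0-a_0)+m_1(b_1-a_1)\le\min\{\cdots\}\Rightarrow g''\ge0\Rightarrow v$ convex $\Rightarrow u_{\gamma_0,\gamma_1}\ge v$, and the reversed conclusion under the $\max$ condition, whereas the proposition pairs ``$\ge\max$'' with ``$u_{\gamma_0,\gamma_1}\ge v$''. Before deciding you are wrong, test consistency with the rest of the paper: for constant fitness ($b_i=a_i$) both thresholds collapse to $\s=a_0-a_1$, so the printed statement would give $u_{\gamma_0,\gamma_1}\ge v$ only when $\s\le0$, contradicting Proposition \ref{teogamma}, which asserts $u_{\gamma_0,\gamma_1}\ge v$ for $\s>0$; and for the Prisoner's Dilemma matrix the printed statement would give $u_{\gamma_0,\gamma_1}\ge v$ for $m_0+m_1\ge1/2$, contradicting the claims of Section \ref{sec:pd}. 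Your pairing agrees with both. The discrepancy traces to a sign in the paper's own proof: differentiating \eqref{qs1} twice gives $\partial_tw+a\partial_xw+2a'w=-a''\partial_xv$ for $w=\partial_{xx}v$ (with $a=-g$), not $+a''\partial_xv$, so ``$a$ convex'' in fact yields $v$ concave and $u_{\gamma_0,\gamma_1}\le v$. In short, your argument is sound, but it proves the proposition with the two hypotheses interchanged --- and that interchanged version is the one consistent with the remainder of the paper.
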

\begin{proof}
We compute the equation in \eqref{Ktwo} along $v$, the solution to \eqref{qs1}, and get
\begin{align}\nonumber 
\partial_tv + \s x (1-x) \partial_xv  -f_0\jcal_0 v- f_1\jcal_1 v = \\ \label{XinK}
-\dfrac{m_0f_0}{\gamma_0} T_1 v (x,\gamma_0(1-x),t) -\dfrac{m_1f_1}{\gamma_1} T_1 v (x,-\gamma_1x,t),
\end{align}
where $T_1$ stands for the first order Taylor expansion w.r.t.~$x$:
\[ T_1 v(x,z,t)= v(x+z,t)-v(x,t)-z \partial_x v(x,t) = \dfrac{1}{2} z^2 \int_0^1 \partial^2_{xx} v(x+\tau z,t)\, d\tau .\]
Besides, deriving equation \eqref{qs1} w.r.t.~$x$ and applying comparison principle assures that $\partial_x v \ge 0$. Next, deriving again  gives that $w=\partial^2_{xx} v$ solves
\[\left\{\begin{array}{ll}
\partial_t w + a(x) \partial_x w +2 a'(x) w= a'' \partial_x v , \qquad & 0\le x\le 1 , \, t>0 \\
w(x,0)=0 , & 0\le x\le 1 , \, t=0, 
 \end{array}\right.\]
with $a(x)=\s x(1-x) -m_0f_0(1-x)+m_1f_1 x$.
Hence if $a$ is convex, then $0$ is a subsolution  and therefore $\partial^2_{xx}v\ge 0$.
Coming back to \eqref{XinK}, we see that $v$ is a  subsolution to \eqref{Ktwo} and therefore $v\le u_{\gamma_0,\gamma_1}$. The opposite happens if $a$ is concave. 
On the other hand, by construction $a''$ is a linear function of $x$, so the thesis readily follows by computations.
\end{proof}

The particular case of quasispecies can be recovered by taking constant fitness, i.e. $b_0-a_0=b_1-a_1=0$. Hence the first part of Proposition \ref{uvcomp} always holds, provided that  the type with lower fitness has been labeled $0$. 

\begin{remark}\label{constss}
Another interesting particular case stands in taking a constant selection spread, i.e. $ a_0-a_1=b_0-b_1=\s$. Here Proposition \ref{uvcomp} gives a complete picture and states that it $ m_0 (b_0-a_0) + m_1 (b_1-a_1) > \s$, then $u_{\gamma_0,\gamma_1}> v$, while if $ m_0 (b_0-a_0) + m_1 (b_1-a_1) < \s$, then $u_{\gamma_0,\gamma_1}> v$. When  $m_0 (b_0-a_0) + m_1 (b_1-a_1) = \s$, then $u_{\gamma_0,\gamma_1}= v$  for any values of the $\gamma_i$'s.
\end{remark}

Up to now, Proposition \ref{uvcomp} is  the only theoretical tool we have in hands to analyze density dependent models. This is the reason we decided perform a series of numerical investigations. In subsection \ref{sec:pd} we shall focus on the Prisoner's Dilemma.


\section{Numerical assessment}\label{3}
In this Section we shall discuss the numerical approximation and some numerical tests for the models 
that have been described above. 
We shall start from the case of the simple replicator equation \eqref{r1}. From this first example it is indeed possible to detect the major difficulties on the approximation of the models we are looking at.

Let start by defining an uniform grid on the set $[0,1]\times[0,T]$. For $\Dx,\Dt\in\R^+$, we define the grid as the set of points $\gcal=\{(x_j,t^n)=(j\Dx,n\Dt), \ j=0,\ldots,N_x, \  n=0,\ldots,N_t\}$ with $N_x\Dx=1$ and $N_t\Dt=T$. A standard approximation of \eqref{r1} on $\gcal$, reads as 
\begin{equation}\label{noSource:scheme_std}
u(x_j,t^{n+1})= \left(1-\Frac{a(x_j)\Dt}{\Dx}\right) u(x_j,t^{n}) + \Frac{a(x_j)\Dt}{\Dx} u(x_{j-1},t^{n}),
\end{equation}
with initial and boundary conditions $u(x_j,0)=x_j$ and $u(0,t^n)= 0$ respectively.

By standard computation, it is possible to point out that the numerical error estimate substantially depends on the behaviour of the space derivatives of the exact solution $u$. 
For the specific problem \eqref{r1}, the exact solution 
\begin{equation}\label{noSource_sol_esatta}
u(x,t) = \Frac{x e^{-\s(t-t_0)}}{1-x(1-e^{-\s(t-t_0)})}
\end{equation}
can be computed by the method of characteristics.                    
It is then easy to check that, for every $0\leq x<1$ be fixed,
$$ \lim_{t\rightarrow+\infty} u(x,t) = 0 \mbox{ and } \lim_{t\rightarrow+\infty} \partial_x u(x,t) = 0,$$
while for all $t>0\ $ $u(1,t)=1$. Moreover, as time grows, $\partial_x u(1,t)$ grows exponentially in time giving a loss of accuracy on the approximation \eqref{noSource:scheme_std}.

We note that, for some set of problem parameters, we will get the same kind of behaviour solving the more general problem \eqref{Ktwo}. Moreover a second difficulty arises due to the non-local nature itself. Looking for instance at the quantity $\Jcal(u,\gamma_0(1-x))$ defined in \eqref{j0}, 
the point is to compute the non-local value $u(x_j+\gamma_0(1-x_j),t^n)$ by means of the known nodes values $u(x_k,t^n)$, $k=0,\ldots,N_x$. The simplest idea is of course to use a linear interpolation. 
However, by the same arguments as before, we must be careful when the non-local point falls in the last cell close to $x=1$.
Indeed, as it has been shown in the previous section \ref{sec:Two_qs}, for this problem too, the first derivative 
$\partial_x u(1,t)$ blows up and 
the approximation by means of linear interpolation loses of accuracy close to the boundary node $x=1$, giving rise to incorrect solutions also in the interior of the domain. It then turns out to be more accurate to compute the value $u(x_j+\gamma_0(1-x_j),t^n)$ by an extrapolation between the two last internal grid points (see Section \ref{sec:unfair}, Figure \ref{fig:unfair_interp_extrap}). At the same time, the extrapolation needs a fine grid to ensure accuracy.
We will handle these difficulties by adopting a variable numerical grid which thickens with the increasing of time near the right boundary $x=1$, and giving an ad-hoc approximation of the nonlocal term $\Jcal(u,\gamma_0(1-x))$.

\subsection{Replicator equation}\label{sec:noSource}
Here we look at the numerical solution of the replicator equation \eqref{r1}. 
As we have already mentioned, to compute the numerical solution we shall adopt a variable grid which thickens with the increasing of time near the right boundary $x=1$. 
\\
Let us first fix the total number of nodes $N_x$ and assume to have on the plane $(x,t)$ at each time $t^n$, $n=0,\ldots,N_t$, the generic set of points $\gcal^n = \{x^n_j,\ j=0,\ldots,N_x\}$, where the space step $\Dx^n_j=x^n_{j}-x^n_{j-1}$ is not constant, $x^n_0=0$, $x^n_{N_x}=1$ $\forall n$, and the time step $\Dt^n=t^{n+1}-t^n$ varies according to a monotonicity condition.
We then define our numerical approximation as 
\begin{equation}\label{noSource:scheme_alpha}
u(x^{n+1}_j,t^{n+1})= (1-\alpha^n_j) u(x^{n}_j,t^{n}) + \alpha^n_j u(x^{n}_{j-1},t^{n}),
\end{equation}
where
\begin{equation}\label{alpha}
\alpha^n_j = \Frac{a(x^n_j)\Dt^n- (x^{n+1}_j-x^n_j)}{\Dx^n_j}.
\end{equation}
Notice that for $x^{n+1}_j = x^n_j$, scheme \eqref{noSource:scheme_alpha} reduces to the standard upwind scheme \eqref{noSource:scheme_std}.

We shall now describe how to construct the variable space grid of  at each time step. 
We start by fixing the total number of nodes $N_x$ in $[0,1]$. The main idea is to increase the number of nodes in the region where the space gradient is higher. To this aim, we shall use the fact that our solutions are monotone increasing and convex.  Since $\partial_x u(x,0)=1$ $\forall\ x\in[0,1]$, at each time $t^n$ we select the point $X(t^n)$ such that
\begin{equation}\label{def_X}
X(t^n)\in(0,1]: \quad \forall\  x^n_j<X(t^n) \quad \mbox{ it holds } \quad  u(x^n_j,t^n) -u(x^n_{j-1},t^n)<\Dx^n_j .
\end{equation}
Then, at each time step $t^n$, the interval $[0,1]$ is divided into two parts, $[0,X(t^n))$ and $[X(t^n),1]$. Now, fix $N_l$ and $N_r$, with $N_r>N_l$ and $N_r+N_l=N_x$. As time grows, the solution on the left side, defined by the points for which the numerical space derivative is less than one, 
does not require a fine approximation and will be computed on a grid with a few number of nodes $N_l$. On the other hand, the solution on the right part needs a more accurate approximation and will be computed with a higher number of nodes $N_r$ (see for instance Figure \ref{fig:noSource_griglia}). Then we set
\begin{equation}\label{passi_small_big}
\Dx^n_{big} = X(t^n)/N_l, \quad\mbox{ and }\quad \Dx^n_{small}=(1-X(t^n))/N_r.
\end{equation}

This procedure together with the expression \eqref{alpha}, defines our numerical scheme. Henceforth, we shall refer to it as \textit{Adaptive grid scheme}
opposed to the \textit{Uniform grid scheme} defined by \eqref{noSource:scheme_std}.

To satisfy the monotonicity request it is needed that $0\le \alpha^n_j \le 1,$
which yields
\[ (x^{n+1}_j-x^n_j) \leq a(x^n_j)\Dt^n \leq \Dx^n_j + (x^{n+1}_j-x^n_j)\]
for all $x^n_j, x^{n+1}_j\in\ (0,1)$.

In the numerical tests, we shall compare the results obtained by applying the 
\textit{Uniform grid scheme} and the \textit{Adaptive grid scheme}.
For the first one we shall fix the space step $\Dx$ and consequently the time step
\begin{equation}
 \Dt =\min_{x_j\in(0,1)} \Frac{\Dx}{a(x_j)},
\end{equation}
while for the second one, the space step varies and hence the time step varies too according to the rule
\begin{equation}
 \Dt^n =\min_{x^n_j,x^{n+1}_j\in(0,1)} \Frac{\Dx^n_j+(x^{n+1}_j-x^n_j)}{a(x^n_j)}.
\end{equation}

In Figure \ref{fig:noSource_confr_DxBig_T5}, the numerical error is given as a function of time and space, and it has been computed by the formula
\begin{equation}\label{noSource_error}
e(x_j^n,t_n) = |u^n_j-u(x_j,t^n)|, 
\end{equation} 
where $u(x_j,t^n)$ is the exact solution given by \eqref{noSource_sol_esatta} and $u^n_j$ is the numerical solution given by the two schemes that we are looking at.

We also compare the two numerical schemes with different number of nodes $N_x$. Specifically, the \textit{Uniform grid scheme} is computed fixing the space step as $\Dx=\Dx_{big}=\max_{n=0,\ldots,N_t}\Dx^n_{big}$, where $\Dx^n_{big}$ have been defined in \eqref{passi_small_big}. This test shows that
\textit{Adaptive grid scheme} gives globally a better approximation.  In particular, from the error graph in Figure \ref{fig:noSource_confr_DxBig_T5}-(b) we observe that the quality of the 
numerical solution given by the \textit{Adaptive grid scheme} is better for $x<1$ and then it is highly accurate in capturing the asymptotic constant value $u(x,+\infty)\equiv 0$ for $x<1$. Such behavior may also be seen zooming the graph near at $x=0$, as done in \ref{fig:noSource_confr_DxBig_T5}-(c).
\\
These same conclusions are confirmed in Figure \ref{fig:noSource_confr_N31_T5}, where the two numerical approaches with the same total number of space nodes $N_x$ are compared at final $T=5$. The difference between the two schemes is then mainly on the displacement of nodes on the interval $[0,1]$. 
It is clear from graphs that in both cases \textit{Adaptive grid scheme} keeps on giving a better approximation. As in the previous test, from the error graphs in Figure \ref{fig:noSource_confr_N31_T5}-(b) 
we observe that, for $x<1$, the quality of the 
numerical solution given by the \textit{Adaptive grid scheme} is still better than the one given by the \textit{Uniform grid scheme} and again it is highly accurate in capturing the asymptotic constant value.

\begin{figure}[!htb]
\begin{center}
\begin{tabular}{c}
\includegraphics[scale=0.8]{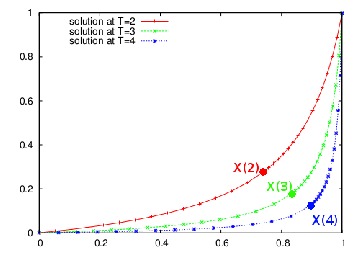}
\end{tabular}
\end{center}
\caption{Test for Section \ref{sec:noSource}. Example of \textit{Adaptive grid}. The $(+)$-line is the solution computed at time $T=2$, the 
(x)-line is the solution computed at time $T=3$ and the $(*)$-line is the solution computed at time $T=4$. The values $X(2)$, $X(3)$, $X(4)$ are the points 
defined in \eqref{def_X} that split the interval $[0,1]$ into two parts, the left one where $\partial_x u(\cdot,\cdot)\leq 1$ and the right one where $\partial_x u(\cdot,\cdot)> 1$ respectively. Notice that $X(4)>X(3)>X(2)$ and that, for construction,  
in all cases grid nodes are more dense after $X(\cdot)$. Here, $X(T=2)\approx 0.741$, $X(T=3)\approx0.835$, $X(T=4)\approx 0.895$.} 
\label{fig:noSource_griglia}
\end{figure}

\begin{figure}[htb!]
\begin{center}
\begin{tabular}{cc}
\includegraphics[scale=0.5]{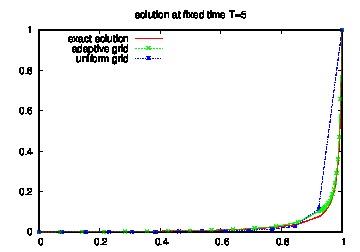}
&
\includegraphics[scale=0.5]{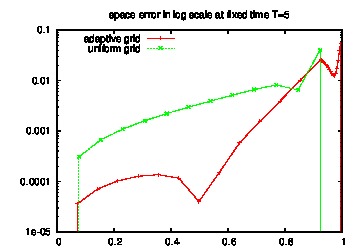}
\\
(a) & (b)
\\
\includegraphics[scale=0.5]{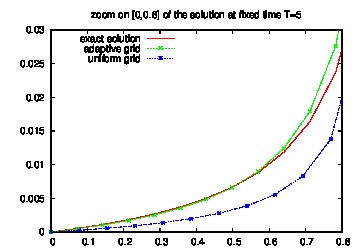}
&
\includegraphics[scale=0.5]{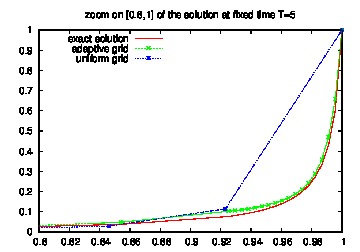}
\\
(c) & (d)
\end{tabular}
\end{center}
\caption{Test for Section \ref{sec:noSource}. Comparison among the two solutions obtained by applying \textit{Adaptive grid scheme} \eqref{noSource:scheme_alpha}-\eqref{alpha} and \textit{Uniform grid scheme} \eqref{noSource:scheme_std} with $\Dx_{big}=0.07$ at final time $T=5$. Figure (a): The (x)-line is the solution of \textit{Adaptive grid scheme}, the ($*$)-line is the solution of \textit{Uniform grid scheme}. The two solutions are compared to exact solution \eqref{noSource_sol_esatta} in smooth line ($-$). Figure (b): graph of the two errors \eqref{noSource_error} in log-scale. Figure (c): zoom in on the solutions in the subinterval  $[0,0.8]$, Figure (d): zoom in on the solutions in the subinterval  $[0.8,1]$. 
From the error graph (b) we observe that for low values of $x$, the quality of the 
numerical solution given by the \textit{Adaptive grid scheme} is better than the one given by the \textit{Uniform grid scheme} and then it is highly accurate in capturing the asymptotic constant value $u(x,+\infty)\equiv 0$ for $x<1$. Such behaviour may also be seen in (c), where we focus on subinterval $[0,0.8]$.
} 
\label{fig:noSource_confr_DxBig_T5}
\end{figure}

\begin{figure}[htb!]
\begin{center}
\begin{tabular}{cc}
\includegraphics[scale=0.5]{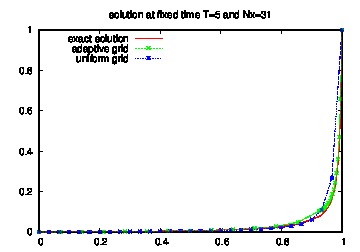}
&
\includegraphics[scale=0.5]{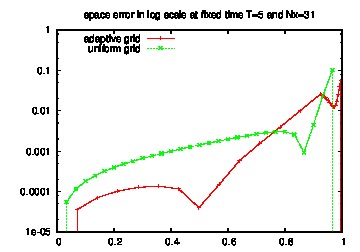}
\\
(a) & (b)
\\
\includegraphics[scale=0.5]{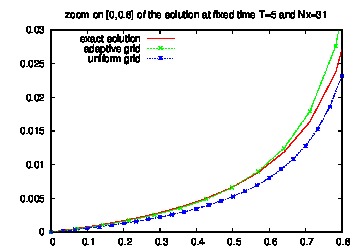}
&
\includegraphics[scale=0.5]{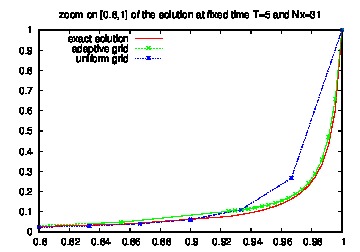}
\\
(c) & (d)
\end{tabular}
\end{center}
\caption{Test for Section \ref{sec:noSource}. Comparison among the solutions obtained by applying \textit{Adaptive grid scheme} \eqref{noSource:scheme_alpha}-\eqref{alpha} and \textit{Uniform grid scheme} \eqref{noSource:scheme_std} with the same total number of nodes $N_x=31$ at final time $T=5$. Figure (a): The (x)-line is the solution of \textit{Adaptive grid scheme}, the ($*$)-line is the solution of \textit{Uniform grid scheme}. The two solutions are compared to exact solution \eqref{noSource_sol_esatta} in smooth line ($-$). Figure (b): graph of the two errors \eqref{noSource_error} in log-scale. Figure (c): zoom in on the solutions in the subinterval $[0,0.8]$. Figure (d): zoom in on the solutions in the subinterval $[0.8,1]$.} 
\label{fig:noSource_confr_N31_T5}
\end{figure}

\begin{figure}[!htp]
\begin{center}
\begin{tabular}{c}
\includegraphics[scale=0.5]{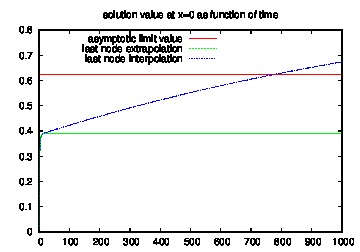}
\end{tabular}
\end{center}
\caption{Unfair mutation with $\gamma_0=0.6$ and $\bar x=0.25$: solution value at point $x=0$ as function of time. The red line is the theoretical asymptotic limit $\bar x/(1-\gamma_0)$ obtained in Proposition \ref{gammateo}. The green and the blue lines are the asymptotic value obtained by scheme \eqref{source:scheme}; for the green one when the non-local value $x^n_j+\gamma_0(1.-x^n_j)$ falls close to $x=1$ it has been computed by an extrapolation with respect the two last internal grid nodes, while for the blue one it has been computed by an interpolation with respect the last grid node and the boundary node $x=1$. As it has been described in Section \ref{sec:unfair}, the use of the interpolation leads to an incorrect solution that grows asymptotically in time.} 
\label{fig:unfair_interp_extrap}
\end{figure}

\subsection{Point-type quasispecies  with weak unfair mutation ($(m_0,m_1)\in C_0$))}\label{sec:unfair}
Here we shall focus on problem \eqref{Ktwo} with $(m_0,m_1)\in C_0=(0,\s/f_0)\times\{0\}$.

We shall apply the \textit{Adaptive grid numerical} approach introduced in Section \ref{sec:noSource}.  The differential part of \eqref{Ktwo} is then approximated by scheme
\eqref{noSource:scheme_alpha}-\eqref{alpha} on the non-Uniform grid $\{x_j^n\}_{n,j}$ defined in \eqref{def_X}.
 We then get
\begin{equation}\label{source:scheme}
\begin{array}{lcl}
u(x^{n+1}_j,t^{n+1})&=& (1-\alpha^n_j) u(x^{n}_j,t^{n}) + \alpha^n_j u(x^{n}_{j-1},t^{n})\\
&&\\
&&-\dfrac{m_0f_0}{\gamma_0}\Dt\left(u(x^{n}_{j},t^{n})-u(x^n_j+\gamma_0(1-x^n_j),t^n)\right),
\end{array}
\end{equation}
under the monotonicity constraint
\begin{equation}
\sup_{x^n_j,x^{n+1}_j\in(0,1)} 
\Frac{x^{n+1}_j-x^n_j}{a(x^n_j)}\leq \Dt^n\leq \min_{x^n_j,x^{n+1}_j\in(0,1)}\Frac{\gamma_0(\Delta x^n_j+x^{n+1}_j-x^n_j)}{\gamma_0a(x^n_j)+m_0f_0\Delta x^n_j} 
\end{equation}
that gives rise to a condition in selecting the Adaptive grid, namely
\begin{equation}
\sup_{x^n_j,x^{n+1}_j\in(0,1)} (x^{n+1}_j-x^n_j) < \Frac{\gamma_0}{m_0f_0}\min_{x^n_j\in(0,1)} a(x^n_j) = \Frac{\gamma_0}{m_0f_0} a(x^n_1),
\end{equation}
at  each time step.

The main point now is to compute the non-local value $u(x^n_j+\gamma_0(1-x^n_j),t^n)$ by means of the known nodes values $u(x_k^n,t^n)$, $k=0,\ldots,N_x$. The simplest idea is of course to use a linear interpolation. 
Specifically, for $k=1,\ldots,N_x-1$ such that
$$
x^n_{k-1} < x^n_j+\gamma_0(1-x^n_j) <x^n_k,
$$
then
$$ 
u^n(x^n_j+\gamma_0(1-x^n_j),t^n)= u^n_k + \Frac{u^n_k-u^n_{k-1}}{x^n_k-x^n_{k-1}}\big((x^n_j+\gamma_0(1-x^n_j))-x_k\big).
$$
However, as we have already mentioned at the beginning of this Section , we must be careful when it happens that the non-local point falls in the last cell close to $x=1$, i.e. 
$$
x^n_{Nx-1} < x^n_j+\gamma_0(1-x^n_j) <x^n_{N_x} = 1.
$$
In this case, it turns out to be more accurate to compute the value $u(x^n_j+\gamma_0(1-x^n_j),t^n)$ by an extrapolation between the two last internal points $x_{N_x-2}$ and $x_{N_x-1}$.
This fact is made clear by Figure \ref{fig:unfair_interp_extrap}, where the solutions obtained by interpolation are compared with the theoretical upper bound \eqref{teo_bound}. 
One can see that the use of the interpolation leads to an incorrect solution that grows asymptotically in time overcoming the theoretical bound value.

\subsubsection{Some qualitative tests for weak unfair mutations} 
Here we consider the weak unfair mutation case, since it is a case where we have some analytical results, but the full behavior of the solutions is not known. We shall give now some numerical tests to highlight the following aspects: 
\begin{enumerate}[(i) ]
\item the better performance of our adaptive numerical scheme with respect to the uniform one; 
\item the dependence of the 
asymptotic equilibrium $\bar u(\gamma_0,0)$ with respect to $\gamma_0$; 
\item the behaviour of the asymptotic equilibrium $\lim_{t\rightarrow \infty} u_{\gamma_0,0}(x,t)=\bar u(x)$ for  $\gamma_0=\gamma^*$, where
$\gamma^*\in(0,1)$ is the only solution to \eqref{eq_gammaStar},
see Remark \ref{remark:gamma*};
\item  the quality of the theoretical estimate given in \eqref{unfair:ux_teo_estimate}.
\end{enumerate}

Let us then go through the points listed above.

\textbf{(i)} The analytical expression of the exact solution to problem \eqref{Ktwo} is not known. 
However, in Proposition \ref{gammateo} some theoretical estimates have been given and they may be used to estimate the goodness of our numerical scheme. 
In Figure \ref{fig:unfair_asym_confr}, we show a test case for which the Adaptive approach gives a numerical solution that meets the theoretical bound \eqref{teo_bound}, while for the same $\Dx_{big}$, the uniform approach gives an asymptotic value that overcome the upper bound value $\bar x/(1-\gamma_0)$.

\textbf{(ii)} It has been shown in Section \ref{sec:Two_qs} that the analytical arguments do not allow to explicitly compute the asymptotic equilibrium $\bar u(\gamma_0,0)$, nor to see if it depends continuously by $\gamma_0$.
Through some numerical simulations we can compute a good approximation of this asymptotic value and we can observe that for every $m_0\in(0,\s/f_0)$, the numerical asymptotic equilibrium $\bar u(\gamma_0,0)$ is a continuous function with respect $\gamma_0$
and we observe that , increasing $\gamma_0$ from $0$ to $1$, the asymptotic equilibrium value $\bar u(\gamma_0,0)$ spans the segment line between $\bar x$ and $1$, see Figure \ref{fig:unfair_sf0_4}-(a)-(b). 

\textbf{(iii)} Proposition \ref{largetwo}-\textit{(ii)} does not apply if $\gamma_0=\gamma^*$. In Figure \ref{fig:unfair_gammaStar} we then plot the solution $u_{\gamma^*,0}(x,t)$ fixing $\s=1$, $m_0=0.4$, $\s/f_0=0.8$ (with this choice of parameters $\gamma^*=0.79$). We then may say that also for the particular value $\gamma^*$ 
the solution $u_{\gamma^*,0}(x,t)$ converges to 
$\bar u(\gamma^*,0)\in [\bar x,1)$ as $t \rightarrow \infty$.

\textbf{(iv)} Here, we would analyze the asymptotic rate by which the solution $u_{\gamma,0}(x,t)$ converges to $\bar u(\gamma_0,0)$. On one hand we compute the solution $u_{\gamma_0,0}$ numerically by scheme \eqref{source:scheme} varying the problem parameters, and we estimate the asymptotic rate as the time $T_{num}(\gamma_0,m_0,\s/f_0)$ such that, for $\epsilon$ small
\begin{equation}\label{num_asym_time}
\partial_t u_{\gamma_0,0}(0,T_{num})\ll \epsilon.
\end{equation}
On the other hand, we compute the two parameter $\alpha$ and $\beta$ in formula \eqref{alphabeta} and we assume the theoretical asymptotic rate as the time $T_{teo}(\gamma_0,m_0,\s/f_0)$ such that
\begin{equation}\label{teo_asym_time}
\frac{\const(\delta)}{(1-\gamma_0)^{\alpha}} e^{-\beta\; T_{teo}}\ll \epsilon \ \mbox{ for } \ \delta = 1 \mbox{ and }  \const(\delta) = 2m_0f_0,
\end{equation}
as it has been expressed in formula \eqref{unfair:ux_teo_estimate} and Remark \ref{conv_rate}.
\\
In Figure \ref{fig:unfair_sf0_8_asymtime} we fix 
$\s/f_0=0.8$  and 
we compare the two surfaces $T_{num}$ and $T_{teo}$ as functions of $\gamma_0$ and $m_0$. We overlap to the surfaces the set of 
values $\{(m_0,\gamma_{max}), \ m_0<\s/f_0\}$ and $\{(m_0,\gamma^*), \ m_0<\s/f_0\}$, where 
$\gamma_{max}$ are the maximum points $\gamma_0$   
of the function $T_{num}(\gamma_0,m_0,\cdot)$.

\begin{figure}[!htp]
\begin{center}
\begin{tabular}{c}
\includegraphics[scale=0.5]{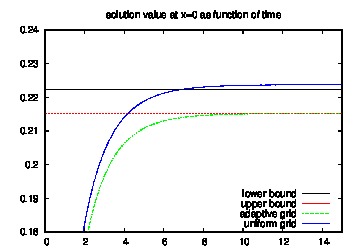}
\end{tabular}
\end{center}
\caption{Unfair mutation with $\gamma_0=0.1$ and $\bar x=0.2$: solution value at point $x=0$ as function of time. The red line is the theoretical asymptotic lower bound $\bar x$ and the black one is the theoretical upper bound $\bar x/(1-\gamma_0)$ given in Proposition \ref{gammateo}. The green and the blue lines are the values obtained by the \textit{Adaptive grid} and the \textit{Uniform grid} respectively. 
The two schemes have been applied with $\Dx_{big}=0.07$. We then get with the adaptive grid a solution that meets the theoretical bounds.}
\label{fig:unfair_asym_confr}
\end{figure}

%

\begin{figure}[htp]
\begin{center}
\begin{tabular}{cc}
\includegraphics[scale=0.4]{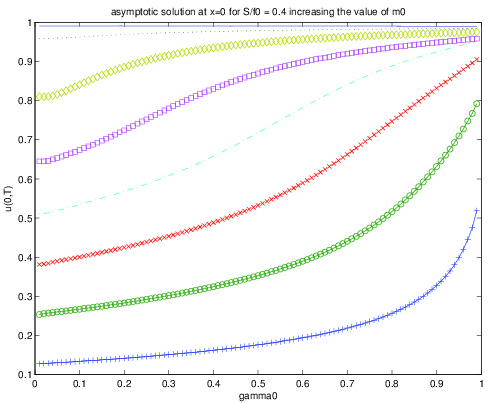}
&
\includegraphics[scale=0.4]{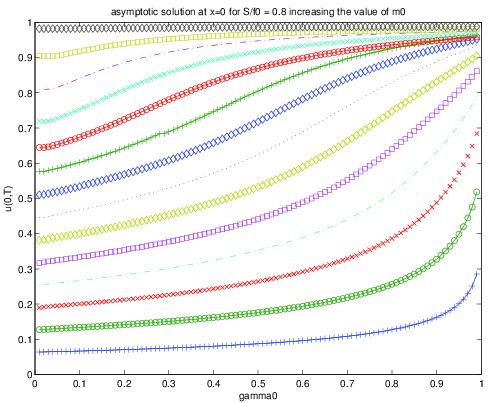}

\\
(a) & (b)
\end{tabular}
\end{center}
\caption{Coexistence region $C_0$ (unfair mutation). Plot of the numerical asymptotic equilibrium $\gamma_0\mapsto\bar u(\gamma_0,0)$, varying $m_0\in(0.05,\s/f_0)$ for  (a) $\s/f_0=0.4$ and (b) $\s/f_0=0.8$ respectively.
We can observe that the asymptotic value is a continuous function with respect to $\gamma_0$, for any value of $m_0$.} 
\label{fig:unfair_sf0_4}
\end{figure}

\begin{figure}[!htp]
\begin{center}
\begin{tabular}{cc}
\includegraphics[scale=0.4]{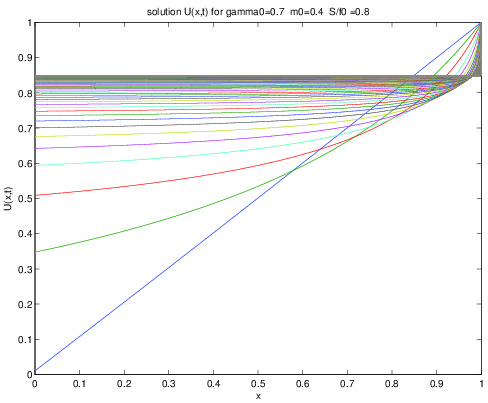}
&
\includegraphics[scale=0.4]{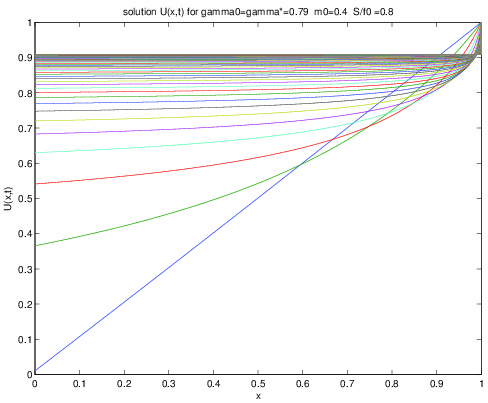}
\\
(a) & (b)
\end{tabular}
\end{center}
\caption{Coexistence region $C_0$ (unfair mutation). Plot of the  numerical solution $u_{\gamma_0,0}(x,t)$, varying the time $t$.
In figure (a)  $\gamma_0=0.7$, while in figure (b) $\gamma_0=0.79$ solves  \eqref{eq_gammaStar}. 
 It is then possible to see that the behaviour of the solution for that particular value  is the same of solution obtained with others $\gamma_0$ values.
} 
\label{fig:unfair_gammaStar}
\end{figure}

\begin{figure}[htp]
\begin{center}
\begin{tabular}{cc}
\includegraphics[scale=0.4]{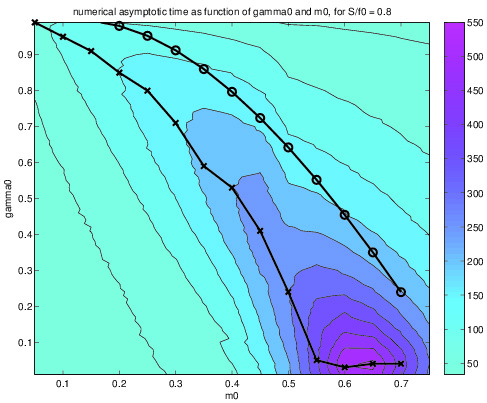}
&
\includegraphics[scale=0.4]{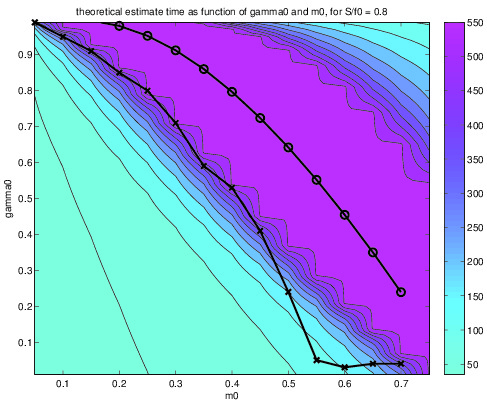}
\\
(a) & (b)
\end{tabular}
\end{center}
\caption{Coexistence region $C_0$ (unfair mutation). 
In both figures, the value of $\s/f_0$ has been fixed at $0.08$. The numerical asymptotic rate $T_{num}$ defined in \eqref{num_asym_time}  (figure (a)), and the theoretical asymptotic rate $T_{teo}$ defined in \eqref{teo_asym_time}  (figure (b)) are plotted on the plane $(m_0,\gamma_0)$. 
In this latter plot we have fixed the surface range of values to $[0,\max_{\gamma_0} T_{num}(\gamma_0,\cdot,\cdot)]$.
The points (x) are the couples  $(m_0,\gamma_{max})$ such that $T_{num}(\gamma_{max},m_0,\s/f_0) = \max_{\gamma_0} T_{num}(\gamma_0,m_0,\s/f_0)$, while the points (o) are the values $(m_0,\gamma^*)$, where $\gamma^*$ is the solution to \eqref{eq_gammaStar}.
}
\label{fig:unfair_sf0_8_asymtime}
\end{figure}

\subsection{Point-type quasispecies  with back and forth mutations ($(m_0,m_1)\in C_1$)}
Here we shall solve numerically problem \eqref{Ktwo} considering various value of $m_0,m_1\in(0,1]$  and  $\gamma_0,\gamma_1\in(0,1]$.
By applying the numerical scheme described above we get
\begin{equation}\label{source:scheme_alpha_beta_hybrid}
\begin{array}{lcl}
u(x^{n+1}_j,t^{n+1})&=& (1-\alpha^n_j) u(x^{n}_j,t^{n}) + \alpha^n_j u(x^{n}_{j-1},t^{n})\\
&&\\
&&+\Dt\dfrac{m_0f_0}{\gamma_0}\left(u(x^n_j+\gamma_0(1-x^n_j),t^n)-u(x^{n}_{j},t^{n})\right)\\
&&\\
&&+ \Dt\dfrac{m_1f_1}{\gamma_1} \left[ u(x^n_j-\gamma_{1}x^n_j,t^n) - u(x^n_j,t^n)\right],
\end{array}
\end{equation}
with $\alpha^n_j$ defined by \eqref{alpha}.

First we fix $\s=1$, $\s/f_0=0.3$,
and  plot the asymptotic value $\bar u_{\gamma_0,\gamma_1}$ in the plane $(\gamma_0,\gamma_1)\in(0,1)^2$, assuming that the couple $(m_0,m_1)$ takes values
into the following set 
$$ S =\{(0.1,0.1),(0.1,0.9),(0.9,0.1),(0.9,0.9)\}.$$
All the graphs in Figure \ref{fig:2qs_test_1} show that the asymptotic value at $x=0$ is a continuous function with respect the couple $(\gamma_0,\gamma_1)$.


\begin{figure}[htp]
\begin{center}
\begin{tabular}{cc}

\includegraphics[scale=0.4]{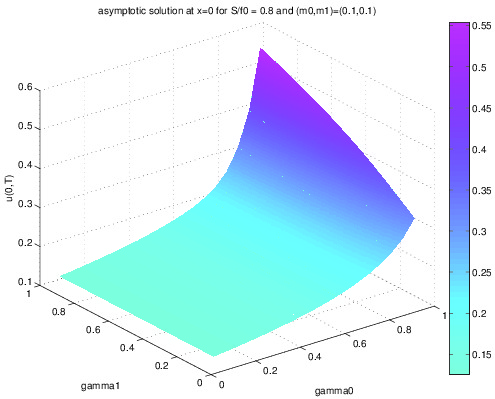}
&
\includegraphics[scale=0.4]{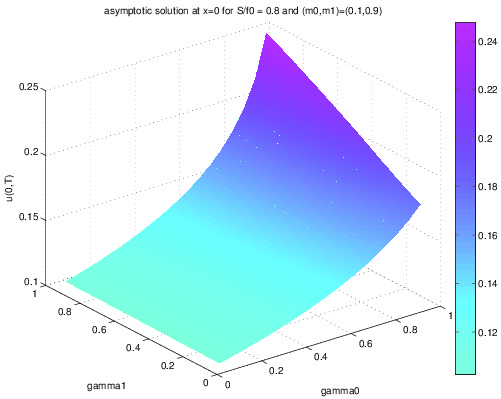}

\\
\includegraphics[scale=0.4]{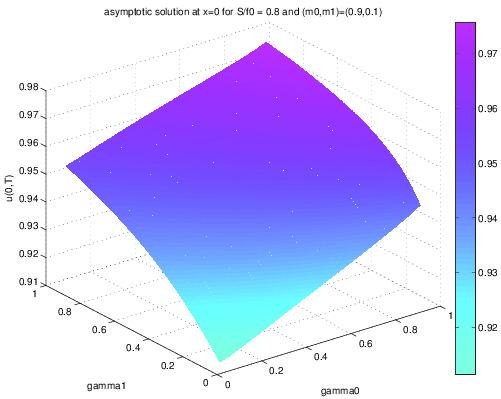}
&
\includegraphics[scale=0.4]{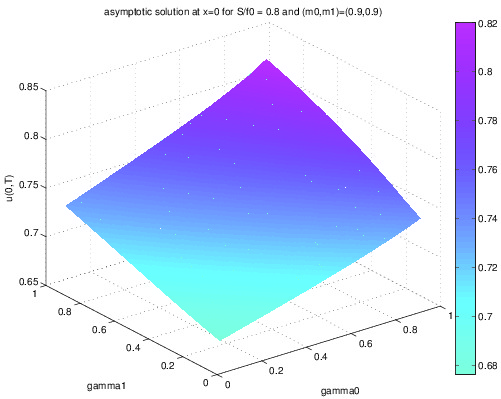}

\end{tabular}
\end{center}
\caption{Coexistence region $C_1$ (mutation  allowed in both directions). Plot of the asymptotic value $\bar u (\gamma_0,\gamma_1)$ in the plane $(\gamma_0,\gamma_1)$ for $\s/f_0=0.3$ fixed and, from top-left to bottom-right, 
$(m_0,m_1)=(0.1,0.1)$, $(m_0,m_1)=(0.1,0.9)$, $(m_0,m_1)=(0.9,0.1)$ and  $(m_0,m_1)=(0.9,0.9)$. These numerical tests show that the asymptotic equilibrium $\bar u(\gamma_0,\gamma_1)$ is a continuous function with respect $\gamma_0$ and $\gamma_1$.} 
\label{fig:2qs_test_1}
\end{figure}


\subsection{Density dependent fitness: the Prisoner's Dilemma.}\label{sec:pd}

In this section we investigate numerically a case example of density-dependent model. 
As noticed before, Propositions \ref{teogamma} and \ref{largetwo} do not apply to this case. On the contrary, the extension of the numerical scheme proposed in the previous section to the case of non-constant fitness, can handle also this more general situation without further difficulties.

We focus on Prisoner's Dilemma, because of the huge interest it receives in the evolutionary dynamics community.
In particular we take the following payoff matrix
\[ A = \left(\begin{array}{cc}  2 &  4\\ 1 & 3\end{array}\right). 
\]
Here, we have  used the convention to  label $0$ defectors and $1$ cooperators, so that the related replicator equation \eqref{r1} has  a stable equilibrium  at $ x=0$.
The replicator-mutator model reads
\begin{align}\label{qsPD} 
\left\{\begin{array}{l}
 \partial_tv = \left((1-2m_0-2m_1)x^2-(1+m_1)x+2m_0\right) \partial_x v,  \\[.15cm]
v(x,0)=x ,\end{array}\right.
\end{align}
and its solution can be computed explicitly. 
Also here the set of the parameters $(m_0,m_1)$ can be split according to the position of the asymptotic equilibrium.
\begin{itemize}
\item[$E$:] the extinction region is $\{0\}\times[0,1]$. Here $\bar x=0$.
\item[$F$:] the fixation region is $[1/4,1]\times\{0\}$. Here $\bar x=1$.
\item[$C_1$:] the global coexistence region is the internal region $(0,1]^2$. Here there is  a globally stable equilibrium at $\bar x \in(0,1)$.
\item[$C_0$:] the local  coexistence region is $=(0,1/4)\times\{0\}$. Here $x=1$ is an unstable equilibrium, and there is a stable equilibrium at $\bar x \in(0,1)$, whose basin of attraction is $[0,1)$.
\end{itemize}

On the other hand, the point-process mutation equation reads 
\begin{align}\label{KPD}
\left\{\begin{array}{rl}
  \partial_tu_{\gamma_0,\gamma_1} = &\!\!\!\! - x (1\!-\!x) \partial_xu_{\gamma_0,\gamma_1} - m_1 (1\!+\!2x) x\, \mathcal{J}(u_{\gamma_0,\gamma_1},-\gamma_1 x)
\\[.15cm]
&\!\!  + m_0 (2\!+\!2x)(1\!-\!x) \, \mathcal{J} (u_{\gamma_0,\gamma_1},\gamma_0(1\!-\!x))  , 
\\[.15cm]
u_{\gamma_0,\gamma_1}(x,0)=&\!\!\!\! x. \end{array}\right.
\end{align}
Remark \ref{constss} fits with this case, so that we can foresee that
\begin{enumerate}[i) ]
\item if $m_0+m_1<1/2$, then $u_{\gamma_0,\gamma_1}\ge v$ pointwise, for any value of the $\gamma_i's$,
\item  if $m_0+m_1>1/2$, then $u_{\gamma_0,\gamma_1}\le v$ pointwise, for any value of the $\gamma_i's$,
\item if $m_0+m_1=1/2$, then replicator-mutator and point-mutation models coincide, actually  $u_{\gamma_0,\gamma_1}= v$ for any value of the $\gamma_i's$.
\end{enumerate}
This means that the time concentration of mutations favours the low-fitness type, as in the quasispecies framework, only below a critical value of the global mutation probability. 
Numerical simulations comply with this relations. 
\\
Figure \ref{fig:prisoner_shorttime_all} illustrates the short-time behaviour of the point-process mutation model, for different values of the mutation parameters $(m_0,m_1)$. In each box  the solutions corresponding to various values of the parameters $\gamma_0,\gamma_1$ are plotted.
Boxes (a) and (b) refer to fair mutation, so $(m_0,m_1)$ is in the extinction region $E$. 
Boxes (c) and (d) refer to unfair mutation: in (c) $(m_0,m_1)$ is in the coexistence region $C_0$, while in (d) $(m_0,m_1)$ is in the fixation region $F$.
Boxes (e) and (f) refer to back and forth mutations, so $(m_0,m_1)$ is in the coexistence range $C_1$. 
In the left side (figures  a, c, e) $m_0+m_1<1/2$: all the solutions stay above the replicator-mutator, and they increase with the $\gamma_i$'s (as for the quasispecies).
In  the right side (figures b, d, f) $m_0+m_1>1/2$ and the order is reversed.

\begin{figure}[htp]
\begin{center}
\begin{tabular}{cc}
\includegraphics[scale=0.4]{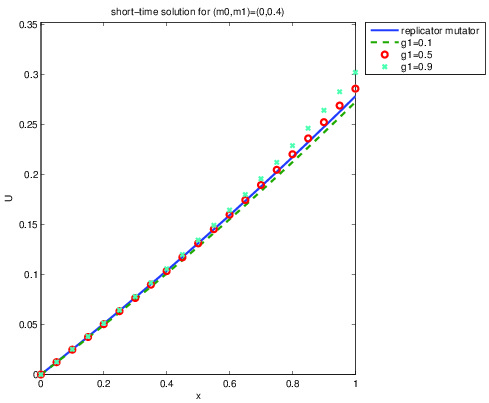}
&
\includegraphics[scale=0.4]{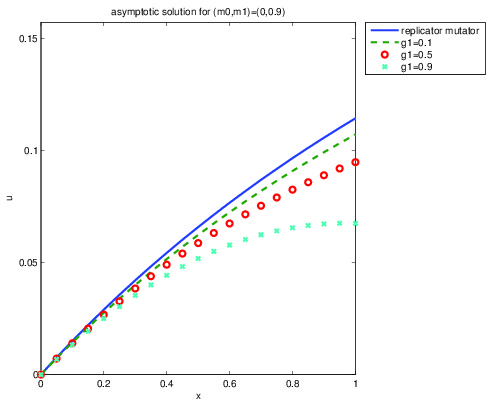}
\\
(a) & (b)
\\
\includegraphics[scale=0.4]{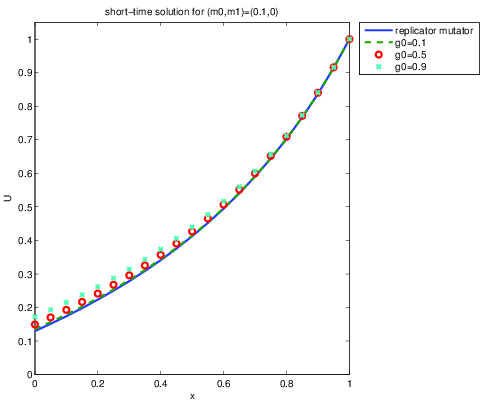}
&
\includegraphics[scale=0.4]{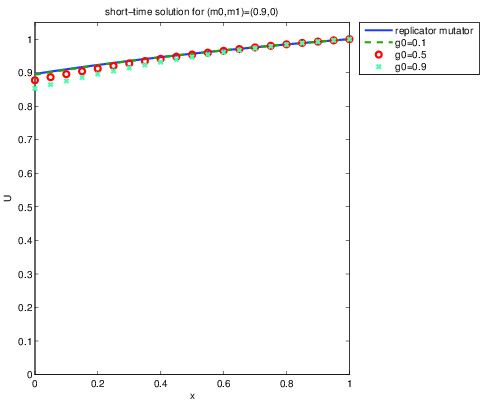}
\\
(c) & (d)
\\
\includegraphics[scale=0.4]{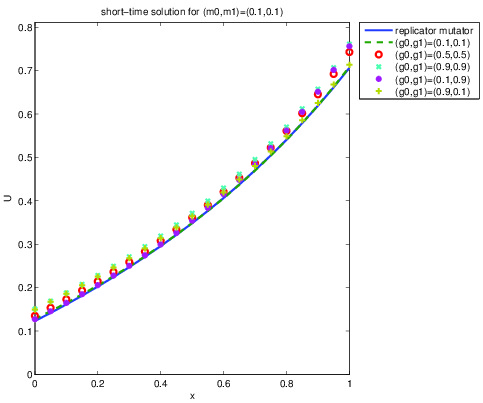}
&
\includegraphics[scale=0.4]{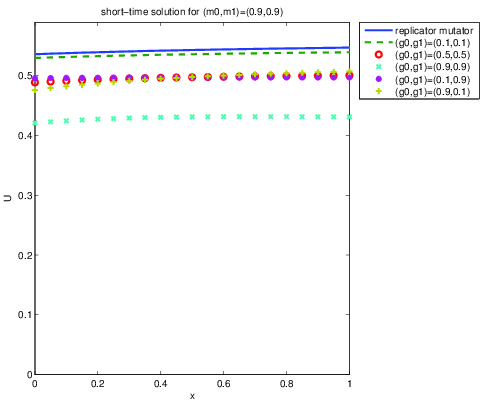}
\\
(e) & (f)
\end{tabular}
\end{center}
\caption{\textbf{Prisoner's Dilemma: short time behaviour}. Plot of the solution $u_{\gamma_0,\gamma_1}(x,T)$ for a fixed short time $T$ and several values of $\gamma_0,\gamma_1$, varying the value of $m_0,m_1$.
For $m_0+m_1<1/2$, the graph of $u_{\gamma_0,\gamma_1}$ stays above the replicator-mutator's one and increases with $\gamma_0, \gamma_1$ (figures (a), (c), (e)).   The situation is reversed for $m_0+m_1>1/2$ (see figures (b), (d), (f)).
} 
\label{fig:prisoner_shorttime_all}
\end{figure}

Figure \ref{fig:prisoner_longtime_all} reports the numerical simulations of long-time asymptotics, that can not be predicted by theoretical tools, so far. 
There are many similarities with the quasispecies case, for instance there is a constant asymptotic equilibrium, which is global if $(m_0,m_1) \in E\cup F\cup C_1$, while in $C_0$ there still is an instable equilibrium at $x=1$.
When $(m_0,m_1)$ is in the extinction region $E$, the point-process model gives extinction for all values of $\gamma_0, \gamma_1$ (see figures (a) and (b)). Also the fixation range is confirmed, as showed in (d).
A new feature shows up in the coexistence range $C_1\cup C_0$. Actually for $m_0+m_1<1/2$, the asymptotic equilibrium stays above the replicator-mutator's one and increases with $\gamma_0, \gamma_1$ (figures  (c), (e)). 
The opposite happens when $m_0+m_1>1/2$:  Figure \ref{fig:prisoner_longtime_all}-(d) reveals that  the time concentration of mutations can also favourite the high-fitness specie, since  the equilibrium stays below the replicator-mutator's one and decreases with $\gamma_0, \gamma_1$.
\\
Figure \ref{fig:prisoner_surf_1} plots the equilibrium $\bar u$ in the plane $(\gamma_0,\gamma_1)$, for $(m_0,m_1)$ fixed in the coexistence range.
Seemingly $\bar u$ depends continuously by $\gamma_0,\gamma_1$.
In (a) $m_0+m_1<1/2$: the picture is the same as for the quasispecies, i.e.~$\bar u$ is monotone increasing, separately on $\gamma_0$ and $\gamma_1$ with
\begin{align*} 
\lim\limits_{(\gamma_0,\gamma_1)\to(0,0)}\bar u(\gamma_0,\gamma_1)=\bar x , \qquad \lim\limits_{(\gamma_0,\gamma_1)\to(1,1)}\bar u(\gamma_0,\gamma_1) = \bar u(1,1) = \frac{2m_0}{2m_0+3m_1} > \bar x. 
\end{align*}
On the contrary, in  (b)  $m_0+m_1>1/2$, then $\bar u$ is monotone decreasing and
\begin{align*}
\lim\limits_{(\gamma_0,\gamma_1)\to(0,0)}\bar u(\gamma_0,\gamma_1)=\bar x , \qquad \lim\limits_{(\gamma_0,\gamma_1)\to(1,1)}\bar u(\gamma_0,\gamma_1) = \bar u(1,1) = \frac{2m_0}{2m_0+3m_1} < \bar x.
\end{align*}

\begin{figure}[htp]
\begin{center}
\begin{tabular}{cc}
\includegraphics[scale=0.4]{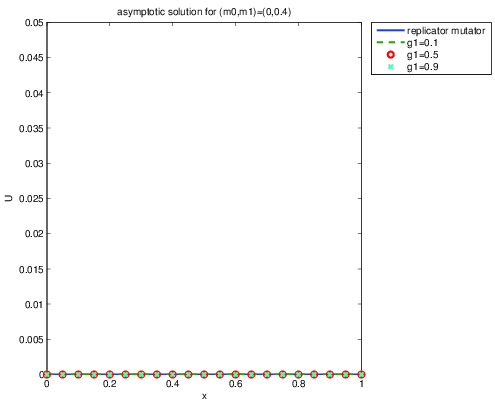}
&
\includegraphics[scale=0.4]{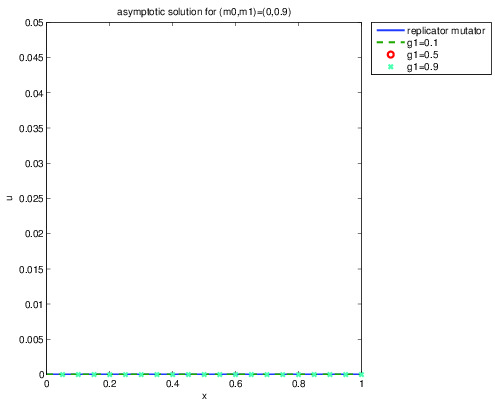}
\\
(a) & (b)
\\
\includegraphics[scale=0.4]{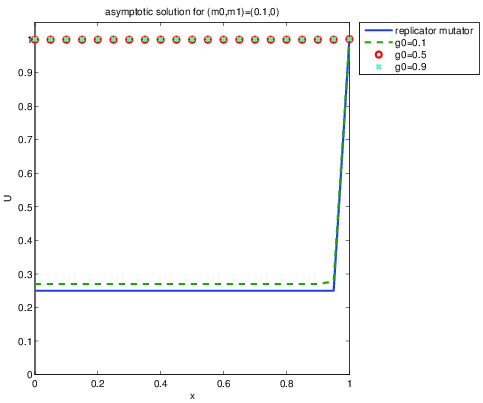}
&
\includegraphics[scale=0.4]{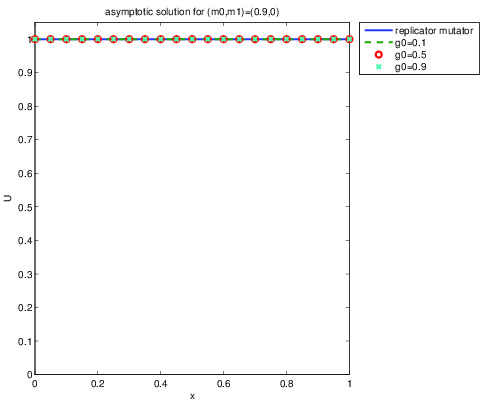}
\\
(c) & (d)
\\
\includegraphics[scale=0.4]{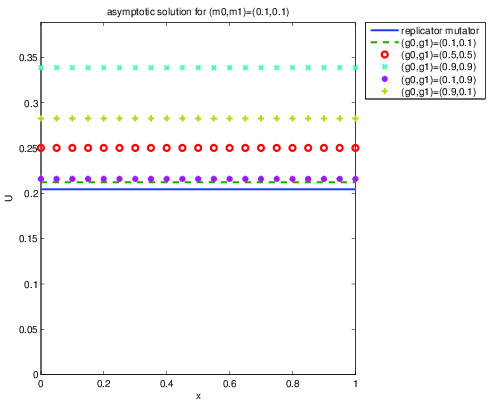}
&
\includegraphics[scale=0.4]{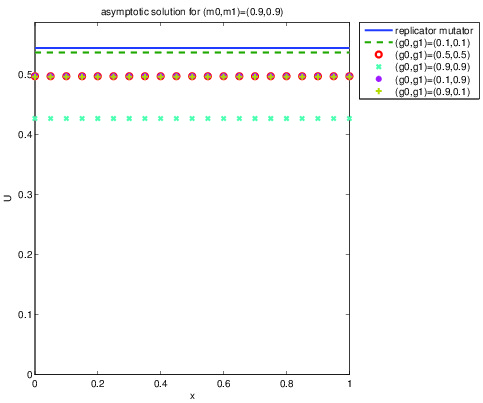}
\\
(e) & (f)
\end{tabular}
\end{center}
\caption{\textbf{Prisoner's Dilemma: long time behaviour}. Plot of the asymptotic solution $\bar u_{\gamma_0,\gamma_1}(x)$ for  several values of $\gamma_0,\gamma_1$, varying the value of $m_0,m_1$ as in Figure \ref{fig:prisoner_shorttime_all}.
In any case the asymptotic equilibrium is constant w.r.t.~$x$, and the extinction/fixation/cohexistence ranges are preserved. 
In the coexistence region, the asymptotic equilibrium depends by $\gamma_0, \gamma_1$. For $m_0+m_1<1/2$, it stays above the replicator-mutator's one and increases with $\gamma_0, \gamma_1$ (figures  (c), (e)). The picture is reversed for $m_0+m_1>1/2$ (figure  (d)).
} 
\label{fig:prisoner_longtime_all}
\end{figure}

\begin{figure}[htp]
\begin{center}
\begin{tabular}{cc}
\includegraphics[scale=0.4]{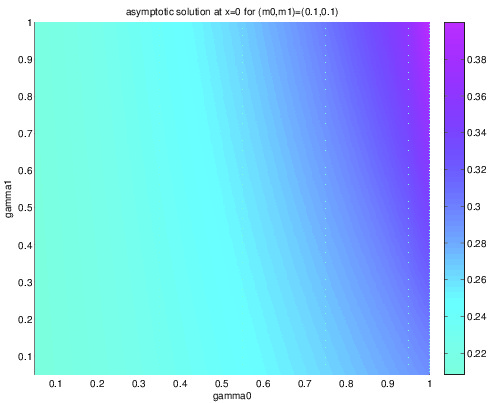}
&
\includegraphics[scale=0.4]{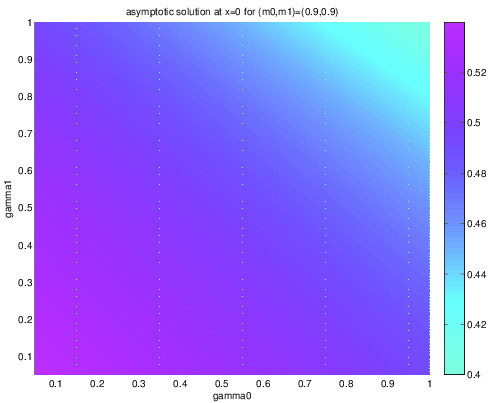}
\\
(a) & (b)
\end{tabular}
\end{center}
\caption{\textbf{Prisoner's Dilemma}. Plot of the asymptotic equilibrium $\bar u$ in the plane $(\gamma_0,\gamma_1)$. In Figure (a)  we set $(m_0,m_1)=(0.1,0.1)$ such that $m_0+m_1 < 1/2$, and in Figure (b) we set $(m_0,m_1)=(0.9,0.9)$ such that $m_0+m_1 > 1/2$.
As noticed before, the order is reversed. Besides, the asymptotic equilibrium seems to be a continuous function of $\gamma_0, \gamma_1$.
} 
\label{fig:prisoner_surf_1}
\end{figure}



\remove{%
\begin{figure}[htp]
\begin{center}
\begin{tabular}{c}
\includegraphics[scale=0.4]{graf/prisoner_shorttime_g_5_m_1.jpg}
\end{tabular}
\end{center}
\caption{\textbf{Prisoner's Dilemma}. Plot of the solution $u_{\gamma_0,\gamma_1}(x,T)$ for a fixed short-time $T$. Here $(m_0,m_1)=(0,0.4)$ have been fixed to be into the \emph{extinction region} $E$.} 
\label{fig:prisoner_shorttime_1}
\end{figure}

\begin{figure}[htp]
\begin{center}
\begin{tabular}{cc}
\includegraphics[scale=0.4]{graf/prisoner_shorttime_g_5_m_2.jpg}
&
\includegraphics[scale=0.4]{graf/prisoner_shorttime_g_5_m_3.jpg}
\\
(a) & (b)
\end{tabular}
\end{center}
\caption{\textbf{Prisoner's Dilemma}. Plot of the solution $u_{\gamma_0,\gamma_1}(x,T)$ for a fixed short-time $T$. In (a) $(m_0,m_1)=(0.9,0.0)$ to be in the \emph{fixation region} $F$. In (b) $(m_0,m_1)=(0.1,0.0)$ to be in the \emph{coexistence region} $C_0$. } 
\label{fig:prisoner_shorttime_2}
\end{figure}

\begin{figure}[htp]
\begin{center}
\begin{tabular}{cc}
\includegraphics[scale=0.4]{graf/prisoner_shorttime_g_5_m_4.jpg}
&
\includegraphics[scale=0.4]{graf/prisoner_shorttime_g_5_m_5.jpg}
\\
(a) & (b)
\end{tabular}
\end{center}
\caption{\textbf{Prisoner's Dilemma}. Plot of the solution $u_{\gamma_0,\gamma_1}(x,T)$ for a fixed short-time $T$. (a) $(m_0,m_1)=(0.1,0.1)$, (b) $(m_0,m_1)=(0.9,0.9)$. \emph{Coexistence region} $C_1$.} 
\label{fig:prisoner_shorttime_3}
\end{figure}

\begin{figure}[htp]
\begin{center}
\begin{tabular}{cc}
\includegraphics[scale=0.4]{graf/prisoner_shorttime_g_5_m_1.jpg}
&
\bf{$(m_0,m_1)=(0,0.9)$ da aggiungere} 
\\
(a) & (b)
\\
\includegraphics[scale=0.4]{graf/prisoner_shorttime_g_5_m_3.jpg}
&
\includegraphics[scale=0.4]{graf/prisoner_shorttime_g_5_m_2.jpg}
\\
(c) & (d)
\\
\includegraphics[scale=0.4]{graf/prisoner_shorttime_g_5_m_4.jpg}
&
\includegraphics[scale=0.4]{graf/prisoner_shorttime_g_5_m_5.jpg}
\\
(e) & (f)
\end{tabular}
\end{center}
\caption{\textcolor{red}{Da \ref{fig:prisoner_shorttime_1} a \ref{fig:prisoner_shorttime_3}}
\textbf{Prisoner's Dilemma: short time behaviour}. Plot of the solution $u_{\gamma_0,\gamma_1}(x,T)$ for a fixed short time $T$ and several values of $\gamma_0,\gamma_1$, varying the value of $m_0,m_1$.
Figures (a) and (b) refer to fair mutation, so $(m_0,m_1)$ is in the extinction region $E$. 
Figures (c) and (d) refer to unfair mutation: in (c) $(m_0,m_1)$ is in the coexistence region $C_0$, while in (d) $(m_0,m_1)$ is in the fixation region $F$.
Figures (e) and (f) refer to back and forth mutations, so $(m_0,m_1)$ is in the coexistence range $C_1$. 
We see that for small values of $m_0+m_1$, the graph of $u_{\gamma_0,\gamma_1}$ stays above the replicator-mutator's one and increases with $\gamma_0, \gamma_1$ (figures (a), (c), (e)). The situation is reversed for large values of $m_0+m_1$ (see figures (b), (d), (f)).
} 
\label{fig:prisoner_shorttime_all}
\end{figure}

\begin{figure}[htp]
\begin{center}
\begin{tabular}{c}
\includegraphics[scale=0.4]{graf/prisoner_longtime_g_5_m_1.jpg}
\end{tabular}
\end{center}
\caption{\textbf{Prisoner's Dilemma}. Plot of the asymptotic solution $\bar u_{\gamma_0,\gamma_1}(x)$. Here $(m_0,m_1)=(0,0.4)$ have been fixed to be into the \emph{extinction region} $E$. } 
\label{fig:prisoner_longtime_1}
\end{figure}

\begin{figure}[htp]
\begin{center}
\begin{tabular}{cc}
\includegraphics[scale=0.4]{graf/prisoner_longtime_g_5_m_2.jpg}
&
\includegraphics[scale=0.4]{graf/prisoner_longtime_g_5_m_3.jpg}
\\
(a) & (b)
\end{tabular}
\end{center}
\caption{\textbf{Prisoner's Dilemma}. Plot of the asymptotic solution $\bar u_{\gamma_0,\gamma_1}(x)$. In (a) $(m_0,m_1)=(0.9,0.0)$ to be in the \emph{fixation region} $F$. In (b) $(m_0,m_1)=(0.1,0.0)$ to be in the \emph{coexistence region} $C_0$.} 
\label{fig:prisoner_longtime_2}
\end{figure}

\begin{figure}[htp]
\begin{center}
\begin{tabular}{cc}
\includegraphics[scale=0.4]{graf/prisoner_longtime_g_5_m_4.jpg}
&
\includegraphics[scale=0.4]{graf/prisoner_longtime_g_5_m_5.jpg}
\\
(a) & (b)
\end{tabular}
\end{center}
\caption{\textbf{Prisoner's Dilemma}. Plot of the asymptotic solution $\bar u_{\gamma_0,\gamma_1}(x)$. (a) $(m_0,m_1)=(0.1,0.1)$, (b) $(m_0,m_1)=(0.9,0.9)$. \emph{Coexistence region} $C_1$. } 
\label{fig:prisoner_longtime_3}
\end{figure}

}%

\section{Conclusions}\label{4}
We have analyzed an integro-differential model for the evolution of populations with point-type mutations proposed in \cite{ACNT}. 
Some extensions of the analytical qualitative properties established before has been presented. 
It was already known that the modified quasispecies model always favours the low-fitness specie, if compared to the standard selection-mutation model. This is not true, in general, in the  density dependent case, where the high-fitness species can also increase its selection advantage during invasion. If this happens, and if the gain is large with respect to the total amount of mutations (both fair and unfair), then the high fitness specie is  benefited  by the point-type mutations. 
\\
Next, a numerical scheme has been proposed, and it has been shown that it can efficiently handle the blow-up of the spatial derivative of the solution, even for large times. Thanks to our numerical scheme,  we  investigated some questions that were left open by a purely analytical approach. In particular, we found that the asymptotic equilibrium of the solutions  depends continuously on the parameters. Moreover, we observed that our scheme works also for density dependent fitness. The relevant example of Prisoner's Dilemma was taken as a case study: it was shown that there is still  a constant asymptotic equilibrium, that depends continuously and monotonically on the parameters. We have also given an explicit example in which the asymptotic  equilibrium presents respectively a higher or a lower concentration of high-fitness type (w.r.t.~the standard replicator-mutator model), depending on the time intensity of the mutation process.

\end{document}